%
\documentclass[runningheads]{llncs}
\usepackage{amsmath,amssymb}
\usepackage[ruled,vlined]{algorithm2e}
\usepackage{hyperref}
\usepackage{graphicx}
\usepackage{array}
\usepackage{caption}
%

\begin{document}
\title{Time Optimal Gathering of Myopic Robots on an Infinite Triangular Grid}

\titlerunning{Time Optimal Gathering of Myopic Robots on Infinite Triangular Grid }

\author{Pritam Goswami\inst{1}\orcidID{0000-0002-0546-3894\thanks{The first three authors are full time research scholars in Jadavpur University.}} \and
Avisek Sharma\inst{1}\orcidID{0000-0001-8940-392X}\and
Satakshi Ghosh\inst{1}\orcidID{0000-0003-1747-4037}\and
Buddhadeb Sau\inst{1}\orcidID{0000-0001-7008-6135}}
\authorrunning{P.Goswami et al.}
%
\institute{Jadavpur University, 188, Raja S.C. Mallick Rd,
Kolkata 700032, India\\
\email{\{pritamgoswami.math.rs, aviseks.math.rs, satakshighosh.math.rs,  buddhadeb.sau\}@jadavpuruniversity.in}}

\maketitle              
\begin{abstract}
This work deals with the problem of gathering $n$ oblivious mobile entities, called robots, at a point (not known beforehand) placed on an infinite triangular grid. The robots are considered to be myopic, i.e., robots have limited visibility. Earlier works of gathering mostly considered the robots either on a plane or on a circle or on a rectangular grid under both full and limited visibility. In the triangular grid, there are two works to the best of our knowledge. The first one is by Cicerone et al. on arbitrary pattern formation where full visibility is considered. The other one by Shibata et al. which considers seven robots with 2- hop visibility that form a hexagon with one robot in the center of the hexagon in a collision-less environment under a fully synchronous scheduler .

In this work, we first show that gathering on a triangular grid with 1-hop vision of robots is not possible even under a fully synchronous scheduler if the robots do not agree on any axis. So one axis agreement has been considered in this work (i.e., the robots agree on a direction and its orientation). We have also shown that the lower bound for time is $\Omega(n)$ epochs when $n$ number of robots are gathering on an infinite triangular grid. An algorithm is then presented where a swarm of $n$ number of robots with 1-hop visibility can gather within $O(n)$ epochs under a semi-synchronous scheduler. So the algorithm presented here is time optimal.  
\keywords{Gathering  \and Triangular Grid \and Swarm robot \and Limited Visibility.}
\end{abstract}
\section{Introduction}
\subsection{Background  and motivation}
A swarm of robots is a collection of a large number of robots with minimal capabilities. In the present research scenario on robotics, researchers are interested in these swarms of robots as these inexpensive robots can collectively do many tasks which earlier were done by single highly expensive robots with many capabilities. The wide application of these swarm of robots in different fields (e.g., search and rescue operations, military operations, cleaning of large surfaces, disaster management, etc.) grabbed the interest of researchers in the field of swarm robotics.

The goal of the researches in this field is to find out the minimum capabilities the robots need to have to do some specific tasks like \textit{gathering} (\cite{DSKN16,F2005}) ,\textit{dispersion} (\cite{AugustineM18}), \textit{arbitrary pattern formation} (\cite{BoseAKS20}) etc. These capabilities are considered when modeling a robot for some specific task. Some of the well known robot models are $\mathcal{OBLOT}$, $\mathcal{FSTA}$, $\mathcal{FCOM}$ and $\mathcal{LUMI}$. In each of these models, the robots are \textit{autonomous} (i.e. there is no central control for the robots), \textit{anonymous} (i.e. the robots do not have any unique identifier), \textit{homogeneous} (i.e all the robots upon activation execute the same deterministic algorithm), \textit{identical} (i.e robots are physically identical). In the $\mathcal{OBLOT}$ model the robots are considered to be  \textit{silent}(i.e robots do not have any direct means of communication) and oblivious (i.e. the robots do not have any persistent memory so that they can remember their earlier state). In $\mathcal{FSTA}$ model, the robots are \textit{silent} but not\textit{ oblivious}. In $\mathcal{FCOM}$, the robots are not \textit{silent} but are \textit{oblivious}. And in $\mathcal{LUMI}$ model, the robots are neither \textit{silent} nor\textit{ oblivious}. Their are many works that have been done considering these four robot models (\cite{SB15,BoseAKS20,BoseKAS21,DSN14,F2005,LunaFPS21}). In this paper, we have considered the weakest $\mathcal{OBLOT}$ model, among these four models.

The activation time of the robots is a huge factor when it comes to the robots doing some tasks. A scheduler is said to be controlling the activation of robots. Mainly there are three types of schedulers that have been used vastly in literature. \textit{Fully synchronous} (FSYNC) scheduler where the time is divided into global rounds of the same length and each robot is activated at the beginning of each round, \textit{semi-synchronous} (SSYNC) scheduler where time is divided into equal-length rounds but all robots may not be activated at the beginning of each round and \textit{asynchronous} (ASYNC) scheduler where any robot can get activated any time as there is no sense of global rounds. Among these, FSYNC and  SSYNC  schedulers are considered to be less practical than ASYNC scheduler. Still, it has been used in many works (\cite{LunaFPS21}) as providing algorithms for a more general and more realistic ASYNC scheduler is not always easy. In this paper, we have considered the SSYNC scheduler.

Vision is another important capability that robots have. The vision of a robot acquires information about the positions of other robots in the environment. A robot can have either full or restricted visibility. In \cite{SB15,BKAS18,CSN18,DSKN16,DSN14,KMP08} authors have modeled the robots to have infinite or full vision. The biggest drawback of full vision is that it is not possible in practical applications due to hardware limitations. So in \cite{AOSY99,F2005,PoudelS21}, authors  considered \textit{limited visibility}. A robot with limited visibility is called a myopic robot. A myopic robot on the plane is assumed to see only up to a certain distance called \textit{visibility range}. In graphs though, the vision of a robot is assumed to be all the vertices within a certain hop from the vertex on which the robot is located. Other than \textit{limited vision}, robots can have\textit{ obstructed vision} where even if the vision of a robot is infinite it might get obstructed by other robots in front of it. This model is also more practical than using point robots that can see through other robots. So in \cite{BoseKAS21}, obstructed vision model has been considered. 

In this paper, we are interested in the problem of \textit{gathering}. The \textit{gathering problem} requires a swarm of robots that are placed either on a plane or on a graph,  to move to a single point that is not known to the robots a priori (Ideal Gathering Configuration). In this work, we have considered the robots on an infinite triangular grid having the least possible vision of 1-hop  under the SSYNC scheduler. Our solution also works under obstructed vision model as a robot needs no information about other robots who are not directly adjacent to it.

Earlier Gathering problem has been considered under limited vision on the plane (\cite{F2005}), but movements of robots are not restricted in the plane as there are infinitely many paths between any two points on a plane. So it would have been interesting to consider this problem on discrete terrain where the movement of a robot is restricted. 
And since grid network has wide application in various fields it was natural to study this problem under different kinds of grids. Now an infinite regular tessellation grid is one of the 3 types of infinite regular grids namely, infinite square grid, infinite triangular grid, and infinite hexagonal grid (\cite{vn2577477}). Our goal is to solve this problem for any infinite regular tessellation grid with the least possible vision for a robot. In \cite{PoudelS21} a solution has already been provided by the authors where the terrain is an infinite square grid embedded on a plane and the robots can either move diagonally from one grid point to another or moves along the edges of the grid. But in their work, the robots can see up to a distance of 2 units (each edge length of the grid is considered to be one unit). So in this paper, by providing a solution for the infinite triangular grid, where a robot can see only up to a unit of distance,  we reached a little closer to our goal of providing a solution for this problem for any infinite regular tessellation grid. Furthermore, we also drew motivation for framing this problem for an infinite triangular grid from the application perspective of it. In \cite{ZhangH05}, authors have shown that for some robots with sensors the coverage will be maximum if the robots are forming a triangular grid and the length of each edge is $\sqrt{3}s$ where $s$ is the sensing radius for the sensors on the robots. So coverage wise triangular grid is better than any other regular tessellation grid. For these specific reasons, we have considered this problem on this specific terrain.

The literature on this problem is very rich. In the following subsection, we have provided a glimpse of the rich literature that lead us to write this paper.
\subsection{Earlier works}
In this paper, we are specifically focused on the problem of \textit{gathering}. Earlier the problem was mainly studied considering the robots on a plane (\cite{AP06,SB15}). But currently, many researchers have been interested in gathering on the discrete environment as well, (\cite{DSKN16,DSN14,KKN10,KMP08}) as movements in graphs become more restricted which is practical in real-life scenarios. In \cite{KMP08}, Klasing et al. studied the gathering problem on a ring and proved that it is impossible to gather on a ring without multiplicity detection. In \cite{DSKN16}, D’Angelo et al. first characterized the problem of gathering on a tree and finite grid. He  has proved that gathering even with global-strong multiplicity detection is impossible if the configuration is periodic or, symmetric
with the line of symmetry passing through the edges of the grid.

Another capability of these robots is their vision. After activation, a robot takes a snapshot of its surroundings to collect information about the positions of the other robots. Gathering has been studied extensively where robots are assumed to have full or infinite visibility (\cite{AP06,SB15,BKAS18,CSN18,CSN19,DSKN16,DSN14,KKN10,KMP08}). But in the application, it is impossible due to hardware limitations. So, in \cite{AOSY99} Ando et al. provided an algorithm where indistinguishable robots with a limited vision on a plane without any common coordinate system converge to a point under a semi-synchronous scheduler. In \cite{F2005} Flocchini et al. have produced a procedure that guides robots with a limited vision on a plane to gather at a single point in finite time. In their work, they have assumed the robots have agreement on the direction and orientation of the axes under an asynchronous scheduler. In \cite{L20}, the authors have shown that gathering is possible by robots on a circle with agreement on the clockwise direction even if a robot can not see the location at an angle $\pi$ from it, under a semi-synchronous scheduler. Gathering under limited visibility where the robots are placed in a discrete environment
has been recently studied by the authors in \cite{PoudelS21} where algorithms have been provided with both one and two-axis agreement under viewing range 2 and 3 simultaneously and square connectivity range $\sqrt{2}$  under asynchronous scheduler.
\subsection{Our contribution}
Recently, in \cite{C21}, the authors have provided an algorithm for robots on a triangular grid to form any arbitrary pattern from any asymmetric initial configuration. In their work, they have assumed that the target configuration can have multiplicities also. So the algorithm provided in \cite{C21} can be used for gathering where the target configuration contains only one location for each robot. But in their work, they have assumed the robots have full visibility which is impractical as in application robots can't have an infinite vision. Also, their algorithm works only when the initial configuration is asymmetric. 

Considering limited vision this problem has earlier been done in the euclidean plane in \cite{F2005}. But in the plane, the movement of a robot is not at all restricted as there are infinitely many paths between any two points on the plane. Also, the authors have considered two axis agreement which makes the robot more powerful which is against the motivation of research on swarm robot algorithms where we need to find the minimum capabilities for the robots to do a specific task. 

In \cite{PoudelS21}, the authors have presented a technique for gathering under limited visibility under an infinite rectangular grid. In their work, they have presented two algorithms. In the first algorithm, they have considered two axis agreement and a vision of $2 \times$ edge length of the grid. And in the second algorithm considering one axis agreement and vision of $3 \times$ edge length of the grid for any robot they have provided an algorithm where the robots may not gather but will surely be on a horizontal segment of unit length (Relaxed Gathering Configuration). Both of these algorithms are not collision-free. Also, observe that none of their algorithms are able to gather if the visibility for each robot is $1\times$ edge length of the grid.

In our work, we have given a  characterization of the gathering problem of myopic robots with 1-hop vision on a triangular grid with any connected initial configuration. We have shown that myopic robots with 1-hop visibility on an infinite triangular grid which agree on the direction of both axis can not gather even under a fully synchronous scheduler if they do not agree on the orientation of any axis. So assuming that myopic robots on an infinite triangular grid have 1-hop (i.e. $1 \times$ length of an edge of the triangular grid) visibility and they agree on the direction and orientation of any one of the three lines that generate the infinite triangular grid, we have provided an algorithm \textsc{1-hop 1-axis gather} (Algorithm \ref{algo1}) which gathers these robots on a single grid point within $O(n)$ epochs under semi-synchronous scheduler where $n$ is the number of myopic robots on the grid. Where one epoch is a time interval such that within which each robot has been activated at least once. We have also shown that any gathering algorithm on a triangular grid must take $\Omega(n)$ epochs where $n$ is the number of robots placed on the infinite triangular grid. Therefore the algorithm we presented in this paper is asymptotically time optimal.

In the following Table \ref{tab:Table} we have compared our work with the works in \cite{C21}, \cite{PoudelS21} and \cite{F2005}.

\begin{table}[ht]
    \centering
    \begin{tabular}{ | m{2.2em} | m{2.5cm} | m{3cm}| m{3cm} | m{2.5cm} | } 
    \hline
     \textbf{SL. No.} & \textbf{Algorithm} & \textbf{Axis Agreement} & \textbf{Visibility} & \textbf{Ideal/Relaxed Gathering }\\
    \hline
    1& Algorithm in \cite{C21}& No axis agreement & Full visibility & Ideal\\
    \hline
    2& Algorithm in \cite{F2005} & Two axis & $V \in \mathbb{R}(> 0)$ & Ideal \\
    \hline
    3& $1^{st}$ Algorithm in \cite{PoudelS21} & Two axis & 2$\times$ edge length & Ideal\\
    \hline
    4 & $2^{nd}$ Algorithm in \cite{PoudelS21} & One axis & 3$\times$ edge length & Relaxed\\
    \hline
    5 & \textsc{1-hop 1-axis gather}\textbf{(This paper)}& One axis & 1$\times$ edge length & Ideal\\
    \hline
\end{tabular}
    \caption{Comparison table}
    \label{tab:Table}
\end{table}
\vspace{0.01\linewidth}

\section{Models and Definitions}
\subsection{Model}
An infinite triangular grid  $\mathcal{G}$ is a geometric graph where each vertex $v$ is placed on a plane and has exactly six adjacent vertices and any induced sub-graph $K_3$ forms an equilateral triangle. Let $R = \{r_1,r_2,r_3, \dots r_n\}$ be $n$ robots placed on the vertices of an infinite triangular grid $\mathcal{G}$.
\subsubsection{Robot Model:} The robots are considered to be-

\textbf{\textit{ autonomous}}: there is no centralized control.

\textbf{\textit{ anonymous}}: robots do not have any unique identifier (ID).

\textbf{\textit{ homogeneous}}: robots execute same deterministic algorithm.

\textbf{\textit{ identical}}: robots are identical by their physical appearance.

The robots are placed on the vertices of an infinite triangular grid $\mathcal{G}$ as a point. The robots do not have any multiplicity detection ability i.e., a robot can not decide if a vertex contains more than one robot or not. The robots do not agree on some global coordinate system, but each robot has its own local coordinate system with itself at the origin and handedness. However, the robots may agree on the direction and orientation of the axes. Based on that we consider the following model.

\textbf{\textit{One axis agreement model:}} In the one axis agreement model all robots agree on the direction and orientation of any specific axis.  Note that any vertex $v$ of the infinite triangular grid $\mathcal{G}$ is at the intersection of three types of lines. In this work, the robots will agree on the orientation and direction of any one of these three types of lines and consider it as its $y$-axis. Note that in this model the robots have a common notion of up and down but not about left or right.

As an input, a robot takes a snapshot after waking. This snapshot contains the position of other robots on $\mathcal{G}$ according to the local coordinates of the robot. In a realistic setting due to limitations of hardware, a robot might not see all of the grid points in a snapshot. So to limit the visibility of the robots we have considered the following visibility model.

\textbf{\textit{Visibility:}}  In $k$-hop visibility model, each robot $r$ can see all the grid points which are at most at a $k$-hop distance from $r$. In this paper, the robots are considered to have 1-hop visibility (i.e. $k=1$). Note that when $k=1$, a robot placed on a vertex $v$  of the infinite triangular grid $\mathcal{G} $ can only see the adjacent six vertices of $v$.

The robots operate in \textit{LOOK-COMPUTE-MOVE (LCM)} cycle. In each cycle a previously inactive or idle robot wakes up and does the following steps:

\textbf{\textit{LOOK:}} In this step after waking a robot placed on $u \in V$ takes a snapshot of the current configuration visible to it as an input. In this step, a robot gets the positions of other robots expressed under its local coordinate system.

\textbf{\textit{COMPUTE:}} In this step a robot computes a destination point $x$ adjacent to its current position, where $x \in V $ according to some deterministic algorithm with the previously obtained snapshot as input.

\textbf{\textit{MOVE:}} After determining a destination point $x \in V$ in the previous step the robot now moves to $x$ through the edge $ux \in E$. Note that if $x =u$ then the robot does not move. 

After completing one \textit{LCM} cycle a robot becomes inactive and again wakes up after a finite but unpredictable number of rounds and executes the \textit{LCM} cycle again.
\subsubsection{Scheduler Model:} Based on the activation and timing of the robots there are mainly three types of schedulers in the literature,

\textbf{\textit{Fully synchronous:}} In the case of a fully synchronous (FSYNC) scheduler time can be divided logically into global rounds where the duration of each round and each step of each round is the same. Also, each robot becomes active at the start of each round (i.e. the set of the active robot at the beginning of each round is the whole of $R$).

\textbf{\textit{Semi-synchronous:}} A semi-synchronous (SSYNC) scheduler is a more general version of a fully synchronous scheduler. In the case of a semi-synchronous scheduler, the set of active robots at the beginning of a round can be a proper subset of $R$. i.e. all the robots might not get activated at the beginning of a round. However, every robot is activated infinitely often.

\textbf{\textit{Asynchronous:}} An Asynchronous (ASYNC) scheduler is the most general model. Here a robot gets activated independently and also executes the \textit{LCM} cycles independently. The amount of time spent in each cycle and the inactive phase may be different for each robot and also for the same robot in two different cycles. This amount of time is finite but unbounded and also unpredictable. Hence there is no common notion of time. Moreover, a robot with delayed computation may compute at a time when other robots have already moved and changed the configuration. Thus the robot with delayed computation now computes with an obsolete configuration as input.

In this paper, we have considered the scheduler to be semi-synchronous. The scheduler that controls the time and activation of the robots can be thought of as an adversary. Observe that the semi-synchronous scheduler can be controlled as a fully synchronous scheduler as SSYNC is more general than FSYNC but not vice-versa. Also, an adversary can decide the local coordinate system of a robot (obeying the agreement rules of axes and orientation). However, after deciding on the coordinate system of a robot it can not be changed further. 
\subsection{Notations and definitions}
\begin{definition}
[Infinite Triangular Grid] An infinite triangular grid $\mathcal{G}$ is an infinite geometric graph $G = (V,E)$, where the vertices are placed on $\mathbb{R}^2$ having coordinates $\{(k,\frac{\sqrt3}{2}i):k\in\mathbb{Z}, i\in2\mathbb{Z}\}\cup\{(k+\frac12,\frac{\sqrt3}{2}i):k\in\mathbb{Z}, i\in2\mathbb{Z}+1\}$ and two vertices are adjacent if the euclidean distance between them is 1 unit. 
\end{definition}
It is to be noted that robots do not have access to this coordinates. This coordinates are used simply for describing the infinite triangular grid $\mathcal{G}$. 
\begin{definition}
[Configuration] A configuration formed by a set of robots $R$, denoted as $\mathcal{C}_R$ (or, simply $\mathcal{C}$) is the pair $(\mathcal{G},f)$ where, $f$ is a map from $V$ to $\{0,1\}$. For $v\in V$, $f(v)=1$ if and only if there is at least one robot on the vertex $v$.
\end{definition}
\begin{definition}[Visibility Graph]
 A visibility graph $G_{\mathcal{C}}$ for a configuration $\mathcal{C}=(\mathcal{G},f)$ is the sub graph of $\mathcal{G}$ induced by set of vertices $\{v\in V: f(v)=1\}$.
\end{definition}
It is not hard to produce a configuration $\mathcal{C}$ with disconnected $G_{\mathcal{C}}$, such that there exists no deterministic algorithm which can gather a set of robots starting from $\mathcal{C}$. So in this work, it is assumed that initially the visibility graph is connected and any algorithm that solves the gathering problem should maintain this connectivity during its complete execution.
\begin{definition}[\texttt{Extreme}]
A robot $r$ is said to be an \texttt{extreme} robot if the following conditions hold in its visibility:
\begin{enumerate}
    \item There is no other robot on the positive $y$-axis of $r$.
    \item Either left or right open half of $r$ is empty.
\end{enumerate}
\end{definition}

\subsection{Problem definition}
Suppose, a swarm of $n$ robots is placed on the grid points of an infinite triangular grid $\mathcal{G}$. The gathering problem requires devising an algorithm such that after some finite time all robots assemble at exactly one grid point and stay forever gathered at that grid point. 
\section{Impossibility Result}

\begin{figure}[ht]
    \centering
     \includegraphics[width=0.6\linewidth]{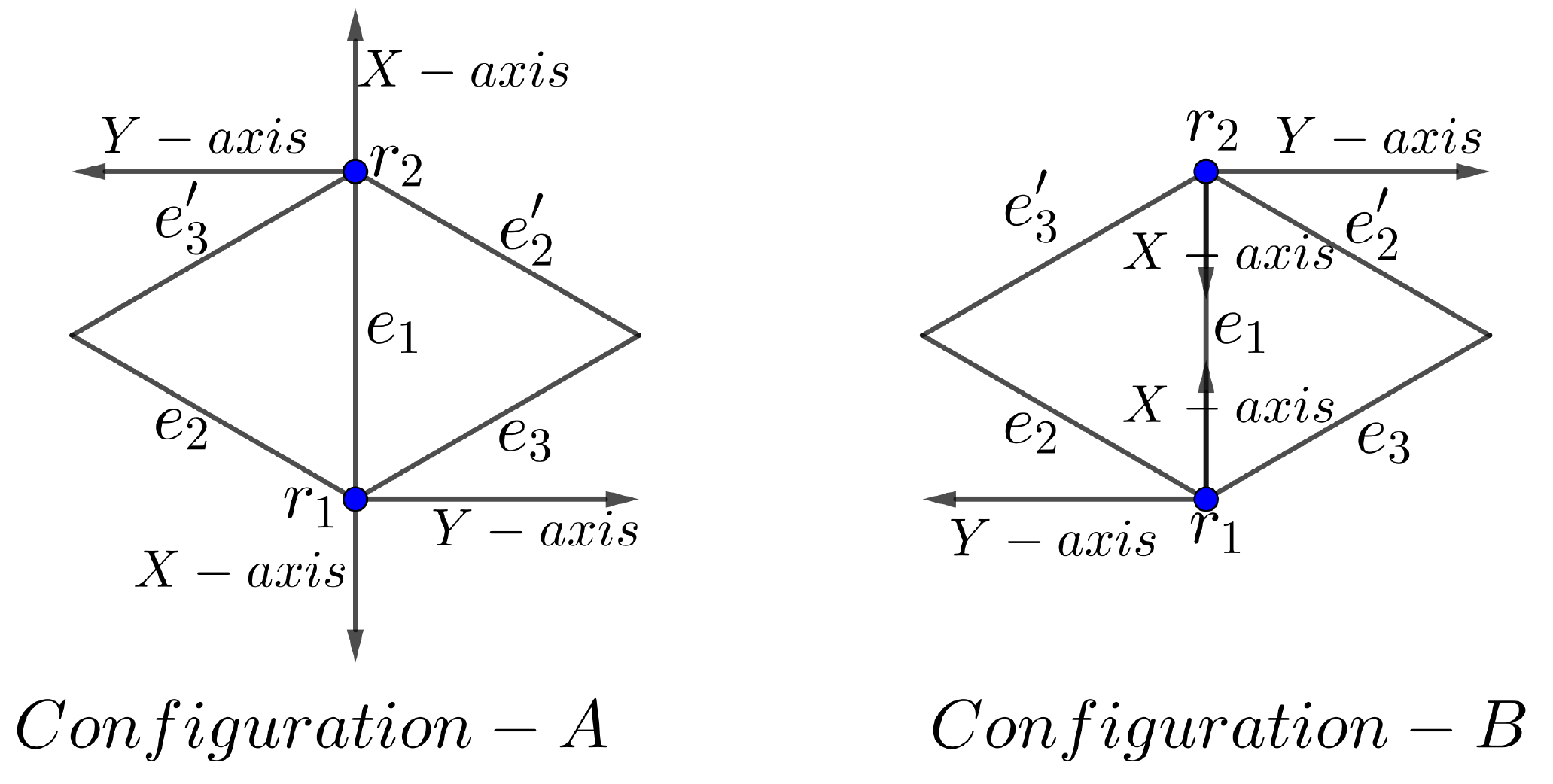}
     \caption{In the diagram the robots agree on direction of both the axes but do not agree on the orientation of the axes. }\label{Fig:impossible}
    \end{figure}

\begin{theorem}\label{thm:impo}
Gathering in a triangular grid is impossible without agreement on the orientation of any axis even when agreement on direction is present and under a fully synchronous scheduler and 1-hop visibility.
\end{theorem}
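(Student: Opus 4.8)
The plan is to rule out any gathering algorithm by a symmetry (indistinguishability) argument. First I would exhibit a small connected initial configuration that is invariant under a $180^\circ$ rotation $\rho$ whose centre is the midpoint $m$ of a grid edge rather than a vertex (the configuration illustrated in Fig.~\ref{Fig:impossible}): concretely, two robots sitting on the endpoints $(0,0)$ and $(1,0)$ of a horizontal edge, so that $\rho\colon (x,y)\mapsto(1-x,-y)$ and $m=(\tfrac12,0)$. The key geometric fact to verify is that this $\rho$ is an automorphism of $\mathcal{G}$: a direct check against the coordinate description of $\mathcal{G}$ shows that $\rho$ sends even rows (integer $x$-coordinate) to even rows and odd rows (half-integer $x$-coordinate) to odd rows, hence permutes $V$ and preserves the unit-distance adjacency. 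Crucially, $\rho$ fixes each of the three line-directions of $\mathcal{G}$ as a whole while reversing the orientation along each of them, and as a planar map it has determinant $+1$, so it is orientation-preserving and respects handedness.

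Next I would let the adversary install $\rho$-related local frames: the robot at $(1,0)$ receives the coordinate system obtained by applying $\rho$ to the frame of the robot at $(0,0)$. This is a legal assignment precisely because the two robots are required to agree only on the \emph{direction} of the axes and not on their \emph{orientation}; the up/down flip induced by $\rho$ is exactly an orientation disagreement, while the common handedness is preserved since $\rho$ is a rotation. Under this setup, the two robots have identical snapshots when each is expressed in its own local coordinates.

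The core of the argument is an induction on the synchronous rounds establishing the invariant that the configuration $\mathcal{C}_t$ is $\rho$-symmetric for every $t$. Under FSYNC both robots are activated in every round; since $\mathcal{C}_t$ is $\rho$-symmetric and the frames are $\rho$-related, the two robots see identical local configurations, so by determinism and homogeneity they compute the same local destination, which in global terms are $\rho$-images of one another. Hence $\mathcal{C}_{t+1}$ is again $\rho$-symmetric, closing the induction. Finally I would observe that a gathered configuration places all robots on a single vertex $v$, forcing the position set $\{v\}$ to be $\rho$-invariant, i.e.\ $\rho(v)=v$; but the unique fixed point of $\rho$ is $m$, which is not a vertex of $\mathcal{G}$. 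Therefore the robots remain on two distinct vertices forever and can never gather, contradicting the correctness of any supposed algorithm.

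The step I expect to demand the most care is the reduction from the model's axis agreement to the usability of this particular symmetry. One must argue that disagreement on orientation together with agreement on direction genuinely permits the adversary to realise $\rho$-related frames, and that the centre of the $180^\circ$ rotation must be a \emph{non-vertex} point: a vertex-centred $180^\circ$ symmetry would \textbf{not} yield impossibility, since the robots could collapse onto the central vertex while preserving symmetry. Thus verifying that $\rho$ is a grid automorphism preserving the three line-directions, and committing to the edge-midpoint as centre, is exactly what makes the impossibility go through.
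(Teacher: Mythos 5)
Your proof is correct, and it shares the paper's adversarial setup --- two robots on adjacent vertices, with the second robot's frame obtained from the first's by the point reflection $\rho$ about the edge midpoint, which is legal exactly because only axis \emph{direction} (not orientation) is agreed --- but it closes the argument differently. The paper argues concretely: since both robots have identical local views, they make the same local move; any move other than the one through the distinguished edge $e_1$ disconnects the visibility graph (disallowed, since the paper stipulates that a gathering algorithm must preserve connectivity), and the move through $e_1$ merely swaps between the two configurations $A$ and $B$ of Fig.~\ref{Fig:impossible}, giving a perpetual oscillation. You instead prove the invariant that the configuration stays $\rho$-symmetric in every FSYNC round and then observe that a gathered configuration $\{v\}$ would force $\rho(v)=v$, impossible since the unique fixed point of $\rho$ is the edge midpoint, not a vertex. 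Your route buys robustness and generality: it needs no case analysis on which edge the algorithm picks, it does not lean on the connectivity-maintenance assumption (a disconnecting move is handled uniformly, since the symmetry persists even when the robots can no longer see each other), and it extends verbatim to any $\rho$-symmetric initial configuration with more robots. The paper's route buys concreteness --- the reader sees exactly the two configurations and the deadlock --- but, taken literally, its dismissal of the ``other'' moves rests on the connectivity stipulation rather than on a self-contained argument. Your closing remark that the rotation centre must be a non-vertex (a vertex-centred symmetry would permit collapse onto the centre) is precisely the right sanity check and is only implicit in the paper.
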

\begin{proof}
    
Let Configuration $A$ in Fig \ref{Fig:impossible} is the initial configuration. Let there be an algorithm $\mathcal{AL}$ on finite execution of which two robots  $r_1$ and $r_2$ from initial configuration $A$ gathers on a single vertex of the triangular grid $\mathcal{G}$. Let there is an adversary who has decided the direction and orientation of both $y$ and $x$ axes for both the robots as it is in Fig \ref{Fig:impossible}. Note that view of $r_1$ and $r_2$ is same and they agree on the direction of both axes in the diagram. So if $r_1$ moves through $e_1$ according to $\mathcal{AL}$ on activation, $r_2$ also moves through $e_1$. Observe that, if $r_1$ moves through any other edge other than $e_1$ then $r_2$ moves in such a way that the visibility graph becomes disconnected. So, both $r_1$ and $r_2$ moves through edge $e_1$ and the configuration transforms into configuration $B$. Using a similar argument it can be shown that Configuration $B$ can only transform into Configuration $A$ as the vision of the robots are 1-hop. So, a deadlock situation occurs. So, our assumption must be wrong. Thus we can conclude that there does not exist any algorithm $\mathcal{AL}$  on the execution of which robots on a triangular grid with a vision of 1-hop gather even under a fully synchronous scheduler without any axis agreement. To be more precise, even if the robots agree on the direction of the axes they will  still not gather when they do not have any agreement on the orientation of axes.\qed
\end{proof}

 Due to Theorem~\ref{thm:impo} we have considered one axis agreement model and devised an algorithm considering 1-hop vision under semi-synchronous scheduler.
\section{Gathering Algorithm}
\label{sec:algorithm}
In this section, an algorithm \textsc{1-hop 1-axis gather} (Algorithm   \ref{algo1}) is provided that will work for a swarm of $n$ myopic robots with one axis agreement and 1-hop visibility under a semi-synchronous scheduler. 
Note that under one axis agreement a robot can divide the grids into two halves based on the agreed line as the $y$-axis. An \texttt{extreme} robot $r$ will always have either left or right open half empty. Thus it is easy to see that when any one of the open halves is non-empty and $r$ is on a grid point $v$, two adjacent grid points of $v$ on the empty open half and another adjacent grid point of $v$ on $y$-axis and above $r$ will always be empty. In this situation, $r$ can uniquely identify the remaining three adjacent grid points of $v$ (one on the $y$-axis and below $r$ and the remaining two are on the non-empty half)  based on the different values of their $y-coordinates$. So an \texttt{extreme} robot can uniquely name them as $v_1, v_2$ and $v_3$ such that $y-coordinate$ of $v_i$ is less than $y-coordinate$ of $v_{i+1}$ and $i \in \{1, 2\}$ (Fig.\ref{Fig.uniname}). We denote position $v_j$ of an \texttt{extreme} robot $r$ as $v_j(r)$ where $j \in \{1, 2, 3\}$. Note that for a non-\texttt{extreme} robot $r$, there are two $v_2(r)$ and two $v_3(r)$ positions as $r$ have either both open halves empty or both open halves non empty.
\begin{figure}[b]
     \centering
     \includegraphics[width=0.25\linewidth]{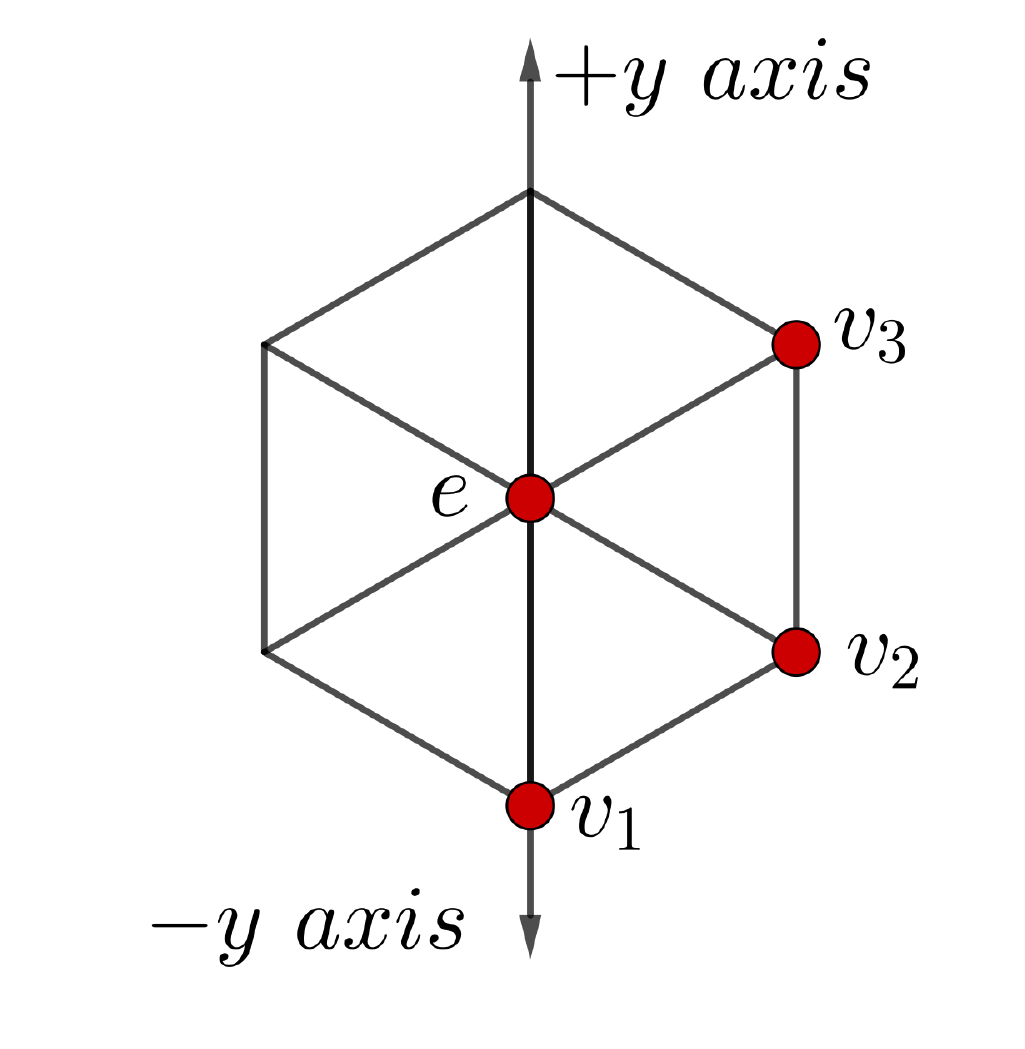}
     \caption{$e$ is an \texttt{extreme} robot it can uniquely identify the positions of $v_1$, $v_2$ and $v_3$ if it sees right or left open half non empty.}\label{Fig.uniname}
    \end{figure}
In the algorithm \textsc{1-hop 1-axis Gather} (\ref{algo1}), an \texttt{extreme} robot $r$ moves to $v_1(r)$ if there is a robot on $v_1(r)$ and there is no robot on $v_3(r)$. $r$ does not move when there is only a robot on $v_3(r)$ or there are robots only  on $v_3(r)$ and $v_1(r)$. In the other remaining cases, if $r$ sees at least one robot on the adjacent vertices it moves to $v_2(r)$. An \texttt{extreme} robot terminates when it does not see any other robot on the adjacent vertices. 

If $r$ is not an \texttt{extreme} robot, then it only moves if there is no robot with $y-coordinate$ greater than zero within its vision and there are two robots on both of its $v_2(r)$ positions. In this scenario the robot $r$ moves to $v_1(r)$.

In Fig. \ref{FigMoveExtreme} we have shown all possible views when a robot $r$ moves and in which direction it moves. In Fig. \ref{FigMoveExtreme} suppose a robot $r$ is placed on the node denoted by a black solid circle. The grid points that are encircled are occupied by other robots. For all the views of $r$ in $View-I$, $r$ moves to $v_1(r)$ and for all the views of $r$ in $View-II$, $r$ moves to $v_2(r)$.

\begin{algorithm}[H]
\small
\caption{\textsc{1-hop 1-axis gather} (for a robot $r$)}\label{algo1}
\KwData{Position of the robots on the adjacent grid points of $r$ on triangular grid $\mathcal{G}$.}
 \KwResult{A vertex on $\mathcal{G}$ adjacent to $r$, as destination point of $r$.}
 \If{$r$ is \texttt{extreme}}
    {
        \If{There is no robot on the adjacent grid points}
        {
            terminate\;   
        }
        \ElseIf{There is a robot only on $v_3(r)$ or there are robots only on both $v_1(r)$ and $v_3(r)$}
        {
            do not move\;
        }
        \ElseIf{There is a robot on $v_1(r)$ and no robot on $v_3(r)$}
        {
            move to $v_1(r)$\;
        }
        \Else
        {
            move to $v_2(r)$\;
        }
    }
    \Else{
        \If{There is a robot on both $v_2(r)$ and no robot on the vertices with $y-coordinate >0$ }
        {
            move to $v_1(r)$\;
        }
        \Else
        {
            do not move\;
        }
    }
    
\end{algorithm}

\begin{figure}[ht]
    \centering
     \includegraphics[width=0.7\linewidth]{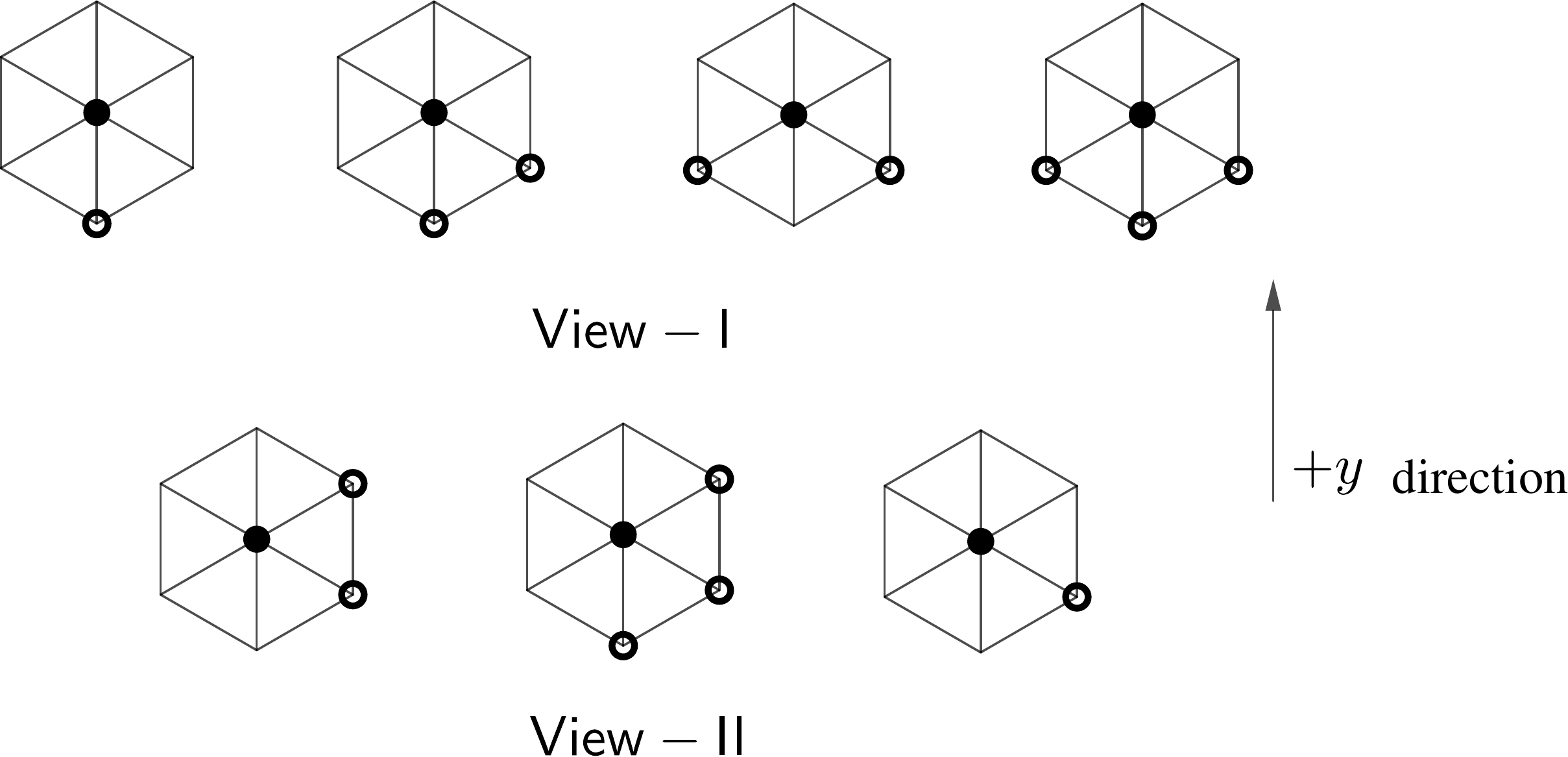}
     \caption{All possible views of a robot $r$ placed on a node indicated by a black solid circle when $r$ decides to move. Encircled point represents a robot occupied node For all views in $View-I$, $r$ moves to $v_1(r)$ position and for all views in $View-II$, $r$ moves to $v_2(r)$ position.  }
     \label{FigMoveExtreme}
    \end{figure}

\subsection{Correctness results:} The intuition of the algorithm \ref{algo1} is that the width of the configuration decreases while the visibility graph stays connected by the movement of the robots. The following results will make this intuition more concrete. Before that let us have some definitions which will be needed in the proof of the results.

\begin{definition}
[Layer] Let $H$ be a straight line perpendicular to the agreed direction of $y-$axis such that there is at least one robot on some grid points on $H$, then  $H$ is called a layer.
\end{definition}
\begin{definition}
[Top most layer, $H_t$] $H_t$ or top most layer of a configuration $\mathcal{C}$ is a layer such that there is no layer above it.
\end{definition}
\begin{definition}
[Vertical line, $L_v$] Let $L_v$ be a line that is parallel to the agreed direction of the $y$-axis such that there is at least one robot on some grid point on $L_v$, then $L_v$ is called a vertical line.
\end{definition}
\begin{definition}
[Left edge, $e_l$] Left edge of a configuration $\mathcal{C}$ or, $e_l$ is the vertical line such that there is no other vertical line on the left of $e_l$.
\end{definition}
\begin{definition}
[Right edge, $e_r$] Right edge of a configuration $\mathcal{C}$ or, $e_r$ is the vertical line such that there is no other vertical line on the right of $e_r$.
\end{definition}
\begin{definition}
[Width of a configuration $\mathcal{C}$] Width of a configuration $w(\mathcal{C})$ is defined as the distance between $e_l$ and $e_r$.
\end{definition}
\begin{definition}
[Depth of a vertical line $L_v$] Depth of a vertical line $L_v$ is defined as the distance between the layers $H_t$ and the layer on which the lowest robot on $L_v$ is located. We denote the depth of line $L_v$ as $d(L_v)$.
\end{definition}
Fig \ref{Figdef} shows all the entities of the above definitions. A brief overview of the correctness proof is given below along with the statements of the results.
\vspace{0.3cm}
\\
\textbf{Overview of the correctness proof:} In Lemma \ref{flemma1}, we have proved that the visibility graph will remain connected throughout the execution of the algorithm. It is necessary to prove this as otherwise, the robots may gather in several clusters on the infinite triangular grid.
Then we have shown that in Lemma \ref{flemma5} the width of the configuration will decrease in finite time. Now when the width of the configuration becomes one then there are only two vertical lines that contain robots. These lines are left edge $e_l$ and right edge $e_r$. Now in this scenario from Lemma \ref{flemma2} the robots on the topmost layer will always move below and the depth of both $e_l$ and $e_r$ never increases (by Lemma \ref{flemma4}). So the depth of both the right and left edge now decreases in each epoch. Hence within finite time, the depth will also become one for either $e_l$ or $e_r$. And in this scenario when the topmost layer shifts down again, all the robots gather at one grid vertex (Theorem \ref{thm2}).
\begin{figure}[ht]
    \centering
     \includegraphics[width=0.6\linewidth]{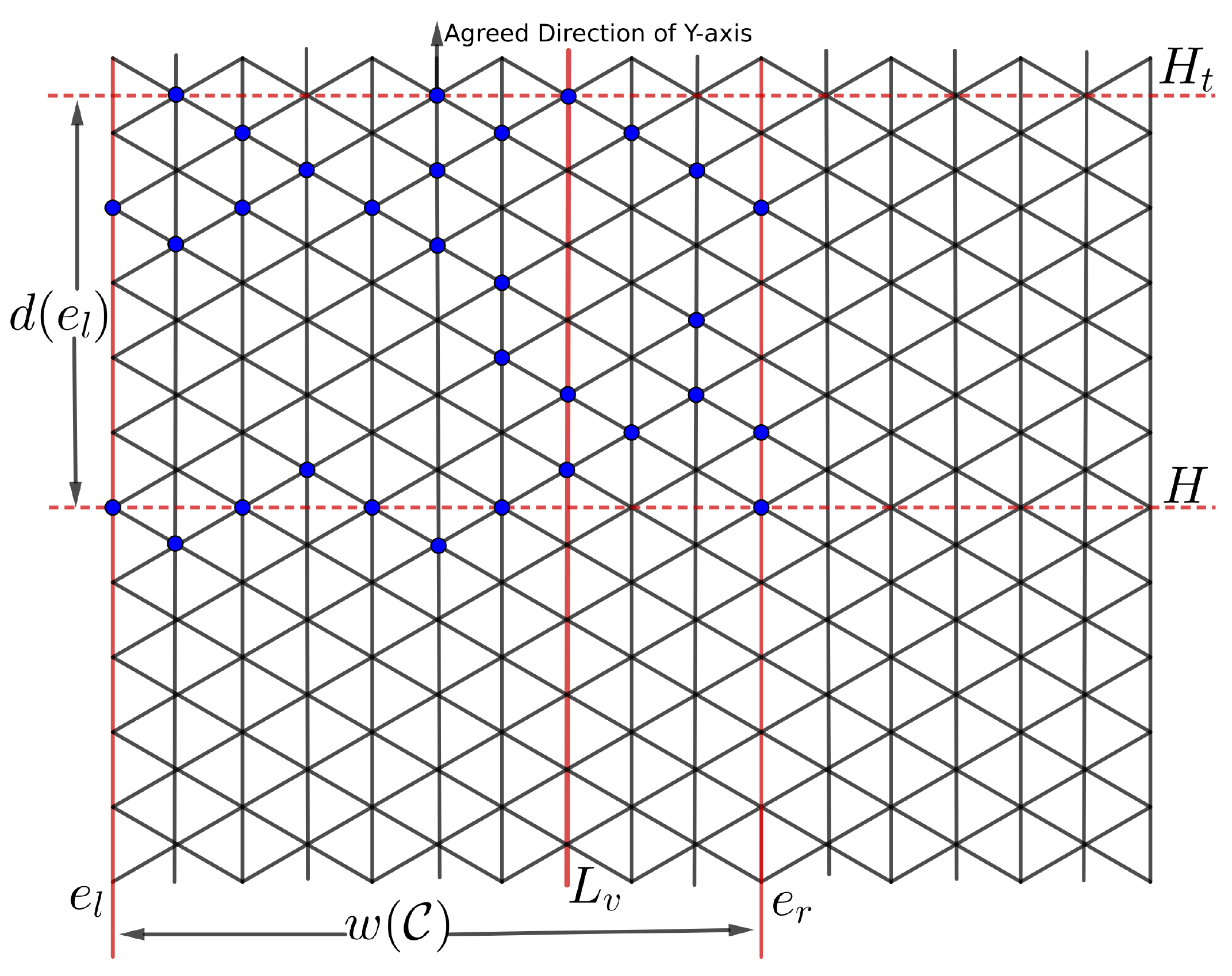}
     \caption{diagram of a configuration $\mathcal{C}$ mentioning layer ($H$), top most layer ($H_t$), vertical line ($L_v$), left edge ($e_l$), right edge ($e_r$), width of the $\mathcal{C}$ ($w(\mathcal{C})$) and depth of the vertical line $e_l$ ($d(e_l)$).}\label{Figdef}
    \end{figure}

\begin{lemma}
\label{flemma1}
If at the start of some round the configuration formed by the robots has connected visibility graph then after execution of Algorithm~\ref{algo1} at the end of that round the visibility graph of the configuration remains connected.
\end{lemma}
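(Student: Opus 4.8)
The plan is to reduce the global statement to a purely local, edge-by-edge claim and then clear that claim by a short case analysis. The first observation to record is that \emph{every} move prescribed by Algorithm~\ref{algo1} is downward and to an adjacent vertex: an \texttt{extreme} robot moves only to $v_1(r)$ (its down-neighbour on the $y$-axis) or to $v_2(r)$ (a lower-diagonal neighbour), and a non-\texttt{extreme} robot moves only to $v_1(r)$; in every case the destination is a neighbour of $r$ with strictly smaller $y$-coordinate. Since under SSYNC all robots activated in the round take their snapshot of the same start-of-round configuration and each moves to a neighbouring vertex, it suffices to prove the following local claim: for any two robots that are co-located or occupy adjacent vertices at the start of the round, their positions after the round are again co-located or adjacent. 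Granting this, connectivity is transported along every chain of the connected graph $G_{\mathcal C}$, and since every robot present after the round came from some robot before it, the new visibility graph $G_{\mathcal{C}'}$ is connected. Co-located robots are immediate: they share a view, so those that move go to a common destination and those that do not stay put, leaving the two at distance at most $1$.

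The engine of the case analysis is the fact that a robot can move \emph{only} if its up-neighbour (the positive-$y$-axis neighbour) is empty: an \texttt{extreme} robot needs this by its defining condition, and a non-\texttt{extreme} robot moves only when it sees no robot of positive $y$-coordinate at all. I would apply this at once to a \textbf{vertical edge}, where $w$ lies directly above $u$: then $u$ has an occupied up-neighbour and is frozen, while $w$ can only descend, either onto $u$ (joining it) or to a lower-diagonal neighbour, each of which is at distance $1$ from $u$. Thus vertical edges are preserved with no interaction between the endpoints.

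The remaining and genuinely delicate case is a \textbf{diagonal edge}; using the left/right symmetry of the algorithm (robots have no notion of left versus right) I may assume $w$ is the upper-diagonal neighbour of $u$. Here both endpoints may move in the same round, and a direct computation shows that certain pairs of destinations would land at distance $\sqrt3$ or $2$, so the whole point is to exclude these using the common start-of-round snapshot. Two constraints do the work. First, since $w$ has positive relative $y$-coordinate, if $u$ is non-\texttt{extreme} it is frozen, while if $u$ is \texttt{extreme} then $w$ is forced to be exactly $v_3(u)$ (the occupied half is the one containing $w$), which rules out the move to $v_1(u)$ and leaves $u$ with only ``stay'' or ``move to $v_2(u)$''. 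Second, because $u$ occupies a lower-diagonal slot of $w$, the only separating destination of $w$ — its far lower-diagonal neighbour, at distance $\sqrt3$ from $u$ — is impossible: it would require $w$ to be \texttt{extreme} with that entire side empty, contradicting the presence of $u$ there. After these exclusions, $u$'s possible images are its own vertex and $v_2(u)$, and each of $w$'s surviving images lies within distance $1$ of both, so $u'$ and $w'$ are adjacent or equal in every combination — a finite check on a handful of explicit neighbour coordinates.

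I expect the \textbf{main obstacle} to be precisely this diagonal, both-endpoints-move situation: the care lies in translating the occupancy conditions of Algorithm~\ref{algo1}, evaluated on the single shared snapshot, into constraints that simultaneously pin down the labels $v_1(u),v_2(u),v_3(u)$ and forbid $w$'s separating move. Everything else — the reduction to pairs, the vertical edges, and the harmless splitting of a multiply-occupied vertex (each fragment only moves to an adjacent vertex, hence stays adjacent to the rest) — is routine bookkeeping, and once the forbidden destination combinations are eliminated the proof closes by comparing coordinates.
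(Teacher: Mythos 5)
Your proposal is correct, and it rests on the same overall strategy as the paper's proof---show that every edge (and co-location) of $G_{\mathcal{C}}$ survives the round, then transport connectivity along paths---but your case decomposition is genuinely different and considerably tighter. The paper organizes the analysis around the \emph{moving robot's view}: two sub-cases for a non-\texttt{extreme} mover and five sub-cases for an \texttt{extreme} mover (Cases II(a)--II(e)), and inside each one it works out which branch of Algorithm~\ref{algo1} every neighbour would simultaneously execute and where it lands. You organize the analysis around the \emph{edge} instead: after isolating the two structural facts that (i) every prescribed move is downward to an adjacent vertex and (ii) a robot whose up-neighbour is occupied can never move (it fails to be \texttt{extreme} and, as a non-\texttt{extreme} robot, sees a robot with positive $y$-coordinate), the vertical edge is immediate, and only the diagonal edge with both endpoints potentially moving survives as a real case. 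There your exclusions are exactly right: $u$ cannot move to $v_1(u)$ because $w$ sits on $v_3(u)$, and $w$ cannot reach its far lower-diagonal because that would require $w$ to be \texttt{extreme} with the side containing $u$ empty; the remaining destination sets $\{u,\,v_2(u)\}$ and $\{w,\,v_1(w),\,v_2(w)=u\}$ are pairwise within distance $1$, so every combination of simultaneous moves preserves the edge. This buys two simplifications over the paper: seven cases collapse to two, and you never need to determine which destination a neighbour actually chooses---only the set of destinations it could choose---whereas the paper tracks the exact landing vertex of each neighbour (information this lemma does not need). Your ``frozen if the up-neighbour is occupied'' observation is precisely what the paper uses repeatedly across its cases but never states as a standalone fact, so I would describe your proof as a cleaner reorganization of the same underlying argument rather than a new one.
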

\begin{proof}
    Let us consider for some $t >0$, the configuration is denoted as $\mathcal{C}$ at the beginning of round $t$. Then the visibility graph at the beginning of round $t$ is denoted as $G_{\mathcal{C}}$. We will now prove this lemma with the argument that during this specific round $t$, if a robot say $r$, decides to move then no edge in $G_{\mathcal{C}}$ of which $r$ is an end vertex before its move, disappears after the move of $r$. This will imply movement of any robot during the round $t$ does not lose any edge in $G_{\mathcal{C}}$. It is also assumed that for the initial configuration $G_{\mathcal{C}}$ is connected. So, this argument is sufficient to prove that $G_{\mathcal{C}}$ will stay connected throughout the execution of the algorithm~\ref{algo1}. If the vertex on which $r$ is located has more than one robot and at least one robot is not activated during the round $t$, it stays connected to $r$ even if $r$ moves during the round $t$, as even after the move of $r$, the distance from the previous vertex to the vertex $r$ reaches after the move is 1-hop. So without loss of generality let $r$ be singleton on its location. We now have two cases: 

\textbf{Case-I:} Let us consider $r$ is not an \texttt{extreme} robot. And during the round $t$, $r$ decides to move. This implies during the look phase of that round, $r$ has seen at least two robots on both of its $v_2(r)$ positions and no robot with $y-coordinate$ greater than zero. Let $r_1$ be any robot on one of the $v_2(r)$ position and  $r_2$ any robot on another $v_2(r)$ position of $r$. Now there are two sub-cases.

\textbf{Case-I(a):} For the first case, let us consider the case where $r$ does not see any robot on $v_1(r)$ position during the look phase of round $t$. Note that in this case before the move of $r$, $r$ is end vertex of the edges $rr_1$ and $rr_2$. Also note that even if $r_1$ and $r_2$ are \texttt{extreme} and active during the round $t$, they do not move during this round as $r_1$ and $r_2$ see robot $r$ at $v_3(r_1)$ and $v_3(r_2)$ positions respectively and does not see any robot on $v_2(r_1) = v_2(r_2)$ . Also, they do not move if they are not \texttt{extreme} as both of them see $r$ at a position with $y-coordinate$ greater than zero. So in this case $r$ moves to $v_1(r)$ during the move phase of round $t$. Note that $v_1(r)$ is 1-hop away from both the $v_2(r)$ position of $r$ before it moves. So after $r$ reaches $v_1(r)$, $rr_1$ and $rr_2$ are both still edges of $G_{\mathcal{C}}$ this is true for any $r_1$ and $r_2$ on both the $v_2(r)$ positions respectively before $r$ moves.

\textbf{Case-I(b):} For the second case, let us consider there is at least a robot at $v_1(r)$ during the look phase of round $t$. Let $r_3$ be any robot on $v_1(r)$. Then before the move of $r$, it is the end vertex of the edges $rr_1, rr_2$ and $rr_3$. In this case note that if $r_3$ gets activated during round $t$, it can not be \texttt{extreme} as it sees $r$ on its positive $y-$axis. Also, for this reason, $r_3$ does not move during the round $t$. Now if $r_1$ and $r_2$ are not \texttt{extreme} they will not move during round $t$ even if they are activated and $r$ moves to $v_1(r)$ to the location of $r_3$. Now with a similar argument for the above case, we can say $rr_1, rr_2$ and $rr_3$ will still be edges after the move of $r$. So let us consider either $r_1$ or $r_2$ is \texttt{extreme} and activated during the round $t$. Without loss of generality let $r_1$ is \texttt{extreme} and it is activated during round $t$ along with $r$. Now $r_1$ will see  robots either in the positions $v_3(r_1)$ and $v_2(r_1)$ or on the positions $v_3(r_1), v_2(r_1)$ and $v_1(r_1)$. In both of these cases $r_1$ moves to $v_2(r_1)$ to the location of $r_3$ along with $r$. So after the move of $r$, $rr_1,rr_2$ and $rr_3$ are still edges of $G_{\mathcal{C}}$. Hence we can conclude that move of a non \texttt{extreme} robot $r$, does not lose any edge of $G_{\mathcal{C}}$ of which $r$ was an end vertex.

\textbf{Case-II:} Let us consider $r$ is an \texttt{extreme} robot that decides to move during a round $t$. Then There are five possible views of $r$ during round $t$.

 \textbf{Case-II(a):} $r$ only sees robots at $v_2(r)$ during look phase of round $t$. Let $r_1$ be a robot on $v_2(r)$. Note that before $r$ moves, $rr_1$ is an  edge of $G_{\mathcal{C}}$ of which $r$ is an end vertex (for any $r_1$ on $v_2(r)$). Now in this case even if $r_1$ is activated during the round $t$, it either sees only $r$ on $v_3(r_1)$ or sees robots on $v_3(r_1)$ and $v_1(r_1)$ during look phase of the round $t$. For both of the cases, $r_1$ does not move during round $t$. Now $r$ moves to $v_2(r)$ at the location of $r_1$. So it is evident that even after $r$ moves $rr_1$ still is an edge of $G_{\mathcal{C}}$.
 
 \textbf{Case-II(b):} $r$ only sees robots at $v_1(r)$ during the look phase of round $t$. Let $r_1$ be a robot on $v_1(r)$. For any $r_1$ at $v_1(r)$, $rr_1$ is an edge of $G_{\mathcal{C}}$ before $r$ moves. Note that $r_1$ is not an \texttt{extreme} robot and it sees $r$ with $y-coordinate$ greater than zero. So, during round $t$ even if $r_1$ is activated, it  never moves. Now $r$ moves to $v_1(r)$ to the location of $r_1$. So it is obvious that $rr_1$ will still be an edge of $G_{\mathcal{C}}$ even after $r$ moves.
 
 \textbf{Case-II(c):} $r$ only sees robots at $v_3(r)$ and $v_2(r)$ during the look phase of round $t$. Note that before the move of $r$, for any $r_1$ at $v_3(r)$ and $r_2$ at $v_2(r)$, $rr_1$ and $rr_2$ are the edges of $G_{\mathcal{C}}$ of which $r$ is an end vertex. Now if $r_1$ and $r_2$ are not activated at round $t$, then  $r$ moves to $v_2(r)$ at the position of $r_2$. Now since $r_2$ and $r_1$ are only 1-hop distance apart, after the move of $r$, it still will be the end vertices of the edges $rr_1$ and $rr_2$. Note that during the round $t$, even if $r_2$  is activated it never moves as it is not an \texttt{extreme} robot and it sees $r_1$ with $y-coordinate$ greater than zero. So let us now consider the case where both $r$ and $r_1$ are activated during the round $t$. In this case if $r_1$ is not \texttt{extreme} and it moves during round $t$ it moves to the location $v_1(r_1) = v_2(r)$, to the location of $r_2$. Now if $r_1$ is \texttt{extreme} it can see robots only on $v_2(r_1)$ and $v_1(r_1)$. In this case also $r_1$ moves to $v_1(r_1) = v_2(r)$, to the location of $r_2$. $r$ also moves to the location of $r_2$.  So even if $r$ and $r_1$ both moves during the round $t$, after the movement $rr_1$ and $rr_2$ are still edges of $G_{\mathcal{C}}$.
 
 \textbf{Case-II(d):} $r$ only sees robots say at $v_2(r)$ and $v_1(r)$ during the look phase of round $t$. Observe that for any $r_1$ at $v_2(r)$ and for any $r_2$ at $v_1(r)$, $r$ is an end vertex of the edges $rr_1$ and $rr_2$ in $G_{\mathcal{C}}$ before it moves. Note that if $r_1$ and $r_2$ does not move during round $t$, then $r$ moves to $v_2(r)$, at the location of $r_2$. Now since $r_2$ is at 1-hop distance from $r_1$, after the move $r$ still is end vertex of the edges $rr_1$ and $rr_2$ of $G_{\mathcal{C}}$. Observe that even if $r_2$ is activated during the round $t$, it does not move as it sees $r$ with $y-coordinate$ is greater than zero. So let us consider $r$ and $r_1$ both are activated at round $t$ and both decides to move. This implies $r_1$ is \texttt{extreme} and during the look phase of round $t$, $r_1$ either sees robots at the positions $v_3(r_1)$ and $v_2(r_1)$ or sees robots at the locations $v_3(r_1), v_2(r_1)$ and $v_1(r_1)$. For both the views $r_1$ moves to $v_2(r_1) = v_1(r)$ i.e, at the location of $r_2$ . Also $r$ moves to $v_1(r) = v_2(r_1)$ i.e at the location of $r_2$. So even if both $r$ and $r_1$ moves during the round $t$, they both moves to $r_2$ during round $t$. So after their move $rr_1$ and $rr_2$ are still edges of $G_{\mathcal{C}}$.
 
 \textbf{Case-II(e):} $r$ only sees robots say at the positions $v_1(r)$, $v_2(r)$ and $v_3(r)$ during the look phase of round $t$. Note that before it moves, $r$ is end vertices of the edges $rr_1, rr_2$ and $rr_3$ in $G_{\mathcal{C}}$ for any $r_1$ at $v_1(r)$, $r_2$ at $v_2(r)$ and $r_3$ at $v_3(r)$. Now if none of $r_1, r_2$ and $r_3$ are activated  or does not move during the round $t$ , then after $r$ moves to $v_2(r)$ i.e to the location of $r_2$ it is still at most 1-hop distance apart from $r_1, r_2$ and $r_3$. So, $rr_1, rr_2$ and $rr_3$ are still edges in $G_{\mathcal{C}}$. Now observe that even if $r_1$ and $r_2$ are activated they do not move during round $t$ as $r_1$ and $r_2$ both sees $r$ and $r_3$ directly above them (i.e on their respective positive $y-$axis) respectively. So let us consider that only  $r$ and $r_3$ are activated during the round $t$. $r$ moves to $v_2(r)$, at the location of $r_2$. Now if $r_3$ is \texttt{extreme} then it sees $r$ and $r_2$ at the positions $v_2(r_3)$ and $v_1(r_3)$ respectively during the look phase of the round $t$. So $r_3$ moves to $v_1(r_3) = v_2(r)$, i.e the location of $r_2$. Observe that all of $r_1, r_2$ and $r_3$ are still at most 1-hop away from $r$. So even if both $r$ and $r_3$ are \texttt{extreme} and both of them moves during round $t$, $rr_1, rr_2$ and $rr_3$ are still edges of $G_{\mathcal{C}}$. Now let us consider the case where $r_3$ is not \texttt{extreme} but it decides to move during the round $t$. Again both of $r$ and $r_3$ moves to $v_1(r_3) = v_2(r)$ and with the similar argument we can conclude, $rr_1, rr_2$ and $rr_3$ still remains edges of $G_{\mathcal{C}}$ even after both $r$ and $r_3$ moves during round $t$.

 For all the cases and for any robot $r$ that decides to move during a round $t$, we showed that all the edges in $G_{\mathcal{C}}$ of which $r$ is an end vertex before the move does not get disappeared after $r$ moves during the round $t$. Now since the Initial configuration is connected, the graph $G_{\mathcal{C}}$ stays connected in each round. Hence the lemma.\qed 
\end{proof}

\begin{figure}[ht]
    \centering
     \includegraphics[width=0.9\linewidth]{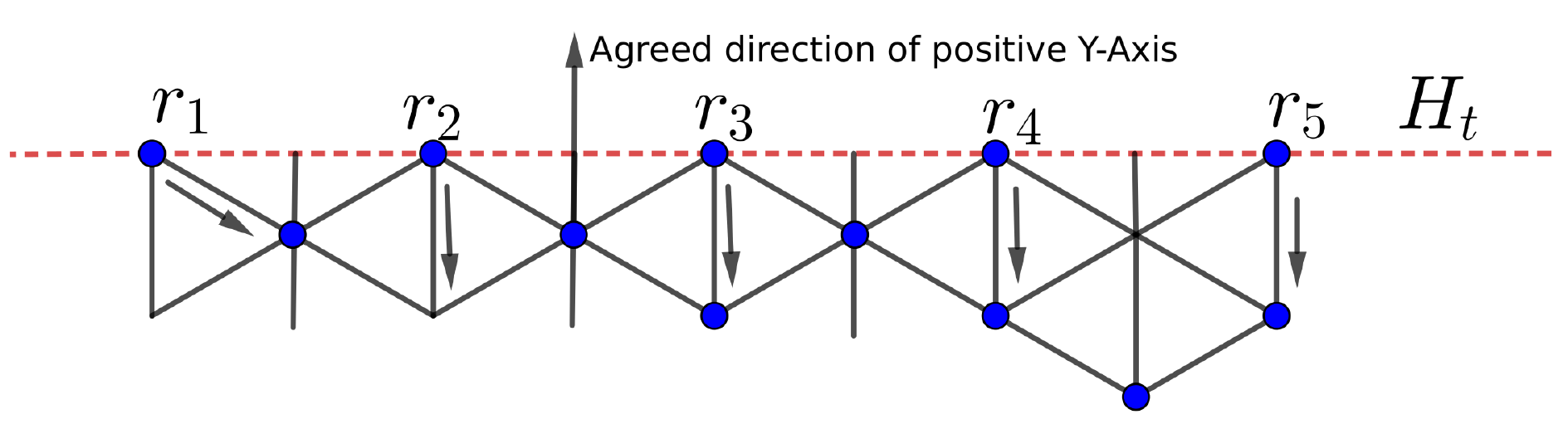}
     \caption{$r_1, r_2, r_3, r_4$ and $r_5$ are robots on $H_t$. Any robot on $H_t$ will always have the view same as one of $r_i$, where $i \in \{1, 2, 3, 4, 5\}$. And for each of these 5 views, a robot always moves to another layer below $H_t$.  }\label{Figtoplayer}
    \end{figure}
    
\begin{lemma}
\label{flemma2}
$H_t$ of the configuration $\mathcal{C}$, always shift down in one epoch until the gathering is complete.
\end{lemma}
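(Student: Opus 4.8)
The plan is to prove the statement from a single global invariant together with a finite case analysis of the local view on the top layer. First I would establish the \textbf{monotonicity invariant}: inspecting Algorithm~\ref{algo1}, every destination a robot can ever choose, namely $v_1(r)$ (the neighbour of $r$ on its $y$-axis strictly below it) or $v_2(r)$ (a lower neighbour on a non-empty half), has a strictly smaller $y$-coordinate than $r$'s current position. Hence \emph{no robot ever increases its $y$-coordinate}; every actual move drops it by $\tfrac12$ or $1$ layer. Two consequences follow at once: the height of the topmost layer can never increase, and no robot strictly below the current $H_t$ can ever reach the height of $H_t$. In particular, if $c$ denotes the height of $H_t$ at the start of the epoch, then throughout the epoch the three ``upper'' neighbours of any vertex at height $c$ (the one on the positive $y$-axis and the two upper diagonals, all at height $c+\tfrac12$ or $c+1$) stay empty.

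Next I would analyse the view of a robot $r$ on $H_t$. Fixing the local frame so that the agreed axis is vertical, $r$ has six neighbours: one directly above and one directly below on its $y$-axis, and two on each open half. Since $r$ is on the top layer, its up-neighbour and both upper diagonals are empty, so the only neighbours that can carry robots are the three ``lower'' ones: the down-neighbour $d$ and the two lower diagonals. Up to the left--right symmetry of the model this yields exactly the five occupancy patterns drawn in Fig.~\ref{Figtoplayer}; moreover, since two robots on the same layer are never adjacent (they are $\sqrt3$ apart) while the visibility graph is connected (Lemma~\ref{flemma1}) and gathering is not complete, $r$ must see at least one occupied lower neighbour, so the empty pattern is excluded. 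For each pattern I would read off the move: $r$ is \texttt{extreme} exactly when at least one lower diagonal is empty (its upper diagonals being already empty, one full open half is then empty), and Algorithm~\ref{algo1} sends $r$ to $v_1(r)=d$ when $d$ is occupied and to $v_2(r)$ (the occupied lower diagonal) otherwise, because $v_3(r)$, sitting at height $c+\tfrac12$, is always empty; when both lower diagonals are occupied $r$ is non-\texttt{extreme}, sees no robot above it, and is sent to $v_1(r)=d$. In every one of the five cases the target lies strictly below $H_t$.

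Finally I would assemble the epoch argument. Let $S$ be the set of robots on $H_t$ at the start of the epoch. By the monotonicity invariant no robot ever enters height $c$ from below, and no robot at height $c$ can move sideways or up, so the only occupants of height $c$ during the epoch are those of $S$, and each leaves height $c$ permanently the first time it is activated. During an epoch every robot is activated at least once; when a robot of $S$ fires it is still at height $c$ (it has not moved before), its upper side is still empty by the invariant, and by Lemma~\ref{flemma1} together with the connectivity/non-adjacency argument above it still sees an occupied lower neighbour, so its current view is again one of the five and it moves strictly down. Hence by the end of the epoch height $c$ is empty while no height above $c$ was ever occupied, so the topmost layer has strictly descended.

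I expect two delicate points. The first is the \emph{semi-synchronous timing}: because the robots on $H_t$ wake at different rounds, I must argue that the relevant view of an $H_t$-robot is unchanged \emph{whenever} it fires during the epoch, not merely at the start; this is exactly what the monotonicity invariant (the upper side stays empty) and Lemma~\ref{flemma1} (an occupied lower neighbour always remains, so the robot never spuriously terminates) are for. The second is the degenerate pattern in which only the down-neighbour is occupied: here both open halves are empty, and I must observe that $v_1(r)$ is nonetheless well defined as the unique $y$-axis neighbour below $r$ while $v_3(r)$ is empty, so $r$ is treated as \texttt{extreme} and still moves down to $v_1(r)$ rather than being frozen.
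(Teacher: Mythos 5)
Your proposal is correct and takes essentially the same route as the paper: a case analysis of the possible views of a robot on $H_t$ (extreme with $v_3(r)$ necessarily empty, versus non-extreme with both $v_2(r)$ positions occupied), showing that Algorithm~\ref{algo1} sends it strictly below $H_t$ in every case, and then invoking the definition of an epoch. Your treatment is in fact somewhat more rigorous than the paper's on two points it leaves implicit — the monotonicity invariant guaranteeing that mid-epoch activations under SSYNC still present one of the five views, and the use of Lemma~\ref{flemma1} to rule out a top-layer robot seeing no neighbour and terminating — but these are refinements of the same argument rather than a different approach.
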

\begin{proof}
Let $r$ be a robot on $H_t$. 
If the gathering is not complete then $r$ must see other robots on its adjacent vertices.

Now, there are two cases. 

\textbf{Case-I:} If $r$ is not \texttt{extreme} then upon activation $r$ must see two robots on each of it's $v_2(r)$ position and no robot with $y-coordinate$ greater than zero. So upon activation $r$ moves to $v_1(r)$ which is below $H_t$.

\textbf{Case-II:} If $r$ is \texttt{extreme}, then there are three cases. Firstly  if $r$ sees a robot only on $v_2(r)$ upon activation, then it moves down to $v_2(r)$ which is below $H_t$. Secondly and thirdly, if $r$ sees robot on only $v_1(r)$ or sees robots both on $v_1(r)$ and $v_2(r)$. For both second and third case, $r$ moves to $v_1(r)$ which is also below $H_t$.

Since in one epoch, all robots on $H_t$ must be activated once they must move below $H_t$. Hence, $H_t$ of the configuration $\mathcal{C}$ always shifts down in one epoch. (Fig.\ref{Figtoplayer}).\qed
\end{proof}
\begin{lemma}
\label{flemma3}
Robots  on $e_l$ or $e_r$ which are not \texttt{extreme} do not move.
\end{lemma}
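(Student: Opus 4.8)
The plan is to reduce the statement to the movement guard of a non-\texttt{extreme} robot in Algorithm~\ref{algo1}. Recall that a non-\texttt{extreme} robot $r$ moves (to $v_1(r)$) only when it sees robots on both of its $v_2(r)$ positions \emph{and} sees no robot with $y$-coordinate greater than zero. So it suffices to exhibit, for every non-\texttt{extreme} robot sitting on $e_l$ (or $e_r$), a robot of positive $y$-coordinate inside its $1$-hop vision; that single fact falsifies the guard and forces $r$ to stay put.

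First I would pin down the geometry of the outer half. Take a robot $r$ on the left edge $e_l$. Its two left-side neighbours (the upper-left and lower-left adjacent vertices) both lie on the vertical line immediately to the left of $e_l$. Since $e_l$ is by definition the leftmost vertical line carrying a robot, that line is empty, so both of these neighbours of $r$ are unoccupied. As these are the only left-half vertices within $1$-hop distance, the left open half of $r$ is empty in its vision, which is precisely the second condition in the definition of \texttt{Extreme}.

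Next I would invoke the logical structure of that definition. Because condition~2 (one open half empty) already holds for $r$, the robot $r$ is \texttt{extreme} if and only if condition~1 (no robot on its positive $y$-axis) also holds. Since $r$ is assumed non-\texttt{extreme}, condition~1 must fail: there is a robot on the positive $y$-axis of $r$. With only $1$-hop visibility, the single vertex of the positive $y$-axis that $r$ can see is the one directly above it, so this robot sits at the up-neighbour of $r$, a vertex of $y$-coordinate strictly greater than zero. Feeding this into the non-\texttt{extreme} branch of Algorithm~\ref{algo1}, the guard ``no robot with $y$-coordinate $>0$'' is violated, hence $r$ does not move. The case of a non-\texttt{extreme} robot on $e_r$ is identical after interchanging left and right, the right open half now being the empty one.

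I expect the only delicate point to be the bookkeeping in the middle two steps: correctly reading off that an edge robot's outer half is empty within its vision, and that for such a robot being non-\texttt{extreme} is therefore equivalent to having an occupied vertex directly above it. There is no estimate or induction involved here; once the equivalence ``on $e_l$ or $e_r$ and non-\texttt{extreme} $\Rightarrow$ a robot occupies the vertex directly above'' is established, the conclusion is immediate from the movement rule, so the main care lies in matching the definitions of \texttt{Extreme}, $e_l$, $e_r$ and the algorithm's guard precisely rather than in any computation.
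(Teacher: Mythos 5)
Your proof is correct, but it falsifies a different conjunct of the movement guard than the paper does. Both arguments start from the same geometric observation: the vertical line immediately outside $e_l$ (resp.\ $e_r$) carries no robot, so both outer neighbours of $r$ are unoccupied. The paper stops essentially there -- one of the two $v_2(r)$ positions of a non-\texttt{extreme} robot on the edge is that outer lower neighbour, hence empty, so the condition ``robots on both $v_2(r)$ positions'' already fails and $r$ stays put. You instead use the emptiness of the outer half to conclude that condition~2 of the definition of \texttt{Extreme} holds, so non-extremeness forces condition~1 to fail, i.e.\ the vertex directly above $r$ (the only visible vertex on its positive $y$-axis under $1$-hop vision) is occupied; that robot has $y$-coordinate $>0$ and kills the other conjunct of the guard. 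Both routes are sound. The paper's is shorter, needing no appeal to the \texttt{Extreme} definition at all, while yours establishes the slightly stronger structural fact that on $e_l$ or $e_r$ a robot is non-\texttt{extreme} if and only if the vertex directly above it is occupied -- a clean characterization, though nothing in the rest of the paper requires it.
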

\begin{proof}
Let  $r$ be a robot on $e_l$ or on $e_r$ which is not \texttt{extreme}. Note that $r$ only moves when it sees there is no robot above (i.e no robots with $y-coordinate >0$) and both of its $v_2(r)$ are occupied by some other robots. Now since $r$ is on $e_l$ or on $e_r$, at least one of it's $v_2(r)$ is empty. So $r$ does not move.\qed
\end{proof}

\begin{figure}[ht]
    \centering
     \includegraphics[width=0.6\linewidth]{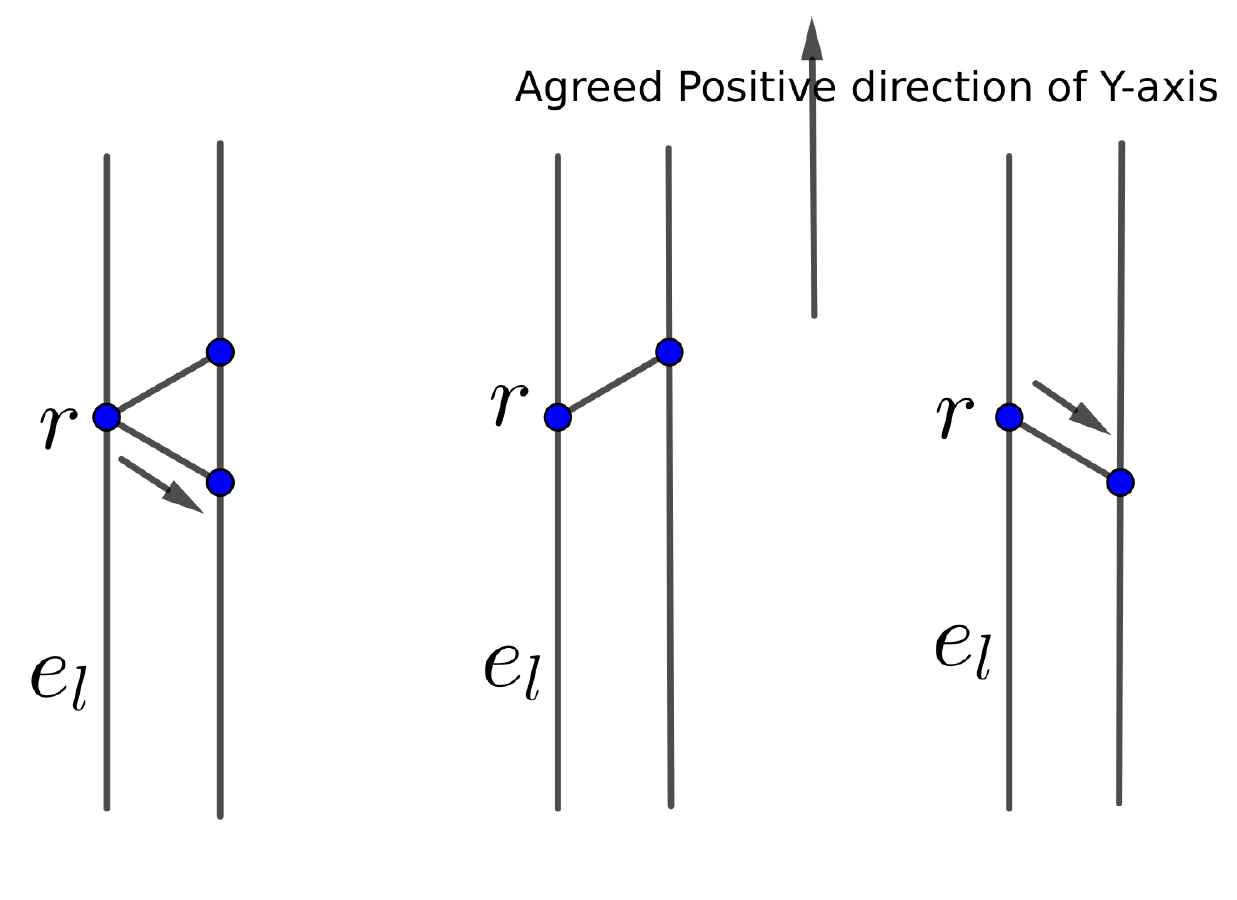}
     \caption{All possible view of the lowest \texttt{extreme} robot $r$ on $e_l$. In each view $r$ never moves directly below to $v_1(r)$.}
     \label{FiglowestExtremeofeL}
    \end{figure}
    
\begin{lemma}
\label{flemma6}
A robot $r$ which is lowest on $e_l$ or $e_r$ never moves down to $v_1(r)$.
\end{lemma}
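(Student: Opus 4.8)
The plan is to argue from the position of $v_1(r)$ relative to the edge carrying $r$, combined with a case split on whether $r$ is \texttt{extreme}. Assume $r$ is the lowest robot on $e_l$; the case of $e_r$ is symmetric, obtained by interchanging the roles of the left and right open halves. First I would record two geometric facts. Fact one: since $v_1(r)$ is the neighbour of $r$ lying directly below $r$ on the agreed $y$-axis, it sits on the very same vertical line as $r$, namely $e_l$, and strictly below $r$; because $r$ is by hypothesis the lowest robot on $e_l$, the vertex $v_1(r)$ carries no robot. Fact two: since $e_l$ is the left edge, no robot occupies any vertical line to the left of $e_l$, so the entire left open half of $r$ is empty; in particular the lower-left neighbour of $r$, which is one of the two $v_2(r)$ positions, is unoccupied.

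Next I would observe, by inspecting Algorithm~\ref{algo1}, that a robot is sent to $v_1(r)$ in exactly two situations: (i) when $r$ is \texttt{extreme}, via the branch requiring a robot on $v_1(r)$ and none on $v_3(r)$; and (ii) when $r$ is not \texttt{extreme}, via the branch requiring robots on both $v_2(r)$ positions together with no robot of positive $y$-coordinate. Thus it suffices to rule out both of these branches for our $r$.

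For the \texttt{extreme} case, the branch sending $r$ to $v_1(r)$ demands a robot on $v_1(r)$, which Fact one forbids; hence this branch cannot fire, and $r$ instead either terminates, stays put, or moves to $v_2(r)$. For the non-\texttt{extreme} case, the branch sending $r$ to $v_1(r)$ demands robots on both $v_2(r)$ positions, but by Fact two the outer (left) one is empty, so again the branch cannot fire and $r$ does not move. In either case $r$ never steps down to $v_1(r)$, as claimed. The only point that needs care, and which I expect to be the main obstacle, is Fact two in the non-\texttt{extreme} subcase: one must use precisely that for a non-\texttt{extreme} robot $v_2(r)$ denotes the pair of neighbours on the layer immediately below $r$ (lower-left and lower-right), so that ``a robot on both $v_2(r)$'' genuinely forces occupancy of the outer lower neighbour, which the leftmost-edge hypothesis rules out. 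Once the labelling $v_1(r),v_2(r),v_3(r)$ is pinned down this way, the rest is a direct reading of the two movement branches.
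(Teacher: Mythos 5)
Your proof is correct and takes essentially the same route as the paper: the paper also splits on whether $r$ is \texttt{extreme}, handling the non-\texttt{extreme} case by the emptiness of one $v_2(r)$ position (which it delegates to Lemma~\ref{flemma3}) and the \texttt{extreme} case by noting that $v_1(r)$ is unoccupied, so the branch sending $r$ to $v_1(r)$ can never fire. The only cosmetic difference is that you inline the argument of Lemma~\ref{flemma3} instead of citing it.
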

\begin{proof}
    We will proof this lemma considering $r$ on $e_l$. If $r$ is on $e_r$ the proof will be similar. Let $r$ is the lowest robot on $e_l$. By lemma \ref{flemma3}, if $r$ is not \texttt{extreme} it does not move. Now if $r$ is \texttt{extreme} then no robot will move to $v_1(r)$ from a different vertical line as $v_1(r)$ is empty. So, $r$ never moves to $v_1(r)$ as it can not see any robot on  it's $v_1(r)$ position (Fig \ref{FiglowestExtremeofeL}). So, $r$ will never move to $v_1(r)$.\qed
\end{proof}
\begin{lemma}
\label{flemma4}
Neither $d(e_l)$ nor $d(e_r)$ ever increase as long as the position of the corresponding vertical line is same.
\end{lemma}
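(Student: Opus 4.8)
The plan is to read the depth $d(e_l)$ (resp. $d(e_r)$) as the vertical length of an interval whose upper endpoint is the layer $H_t$ and whose lower endpoint is the lowest occupied vertex of $e_l$. Consequently $d(e_l)$ can grow during a round only in one of two ways: either $H_t$ rises, or the lowest occupied vertex of $e_l$ moves strictly lower. I would rule out both, the second being the real content.

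First I would dispose of the upper endpoint. Every destination prescribed by Algorithm~\ref{algo1} is either $v_1(r)$ or $v_2(r)$, and both of these vertices lie strictly below $r$ (they have strictly smaller $y$-coordinate). Hence no robot ever increases its $y$-coordinate, so no robot can create or populate a layer above the current $H_t$; therefore $H_t$ can never rise (it can only shift down, consistent with Lemma~\ref{flemma2}). Thus the top endpoint of the depth interval never moves in the bad direction.

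The substantive step is to show that, while $e_l$ keeps its $x$-position, its lowest occupied vertex never descends. Fix the lowest robot on $e_l$ at layer $\ell$; every vertex of $e_l$ strictly below $\ell$ is empty at the start of the round, and I must show none becomes occupied and that the robot at $\ell$ does not drop. A vertex of $e_l$ can be entered only by a move to some robot's $v_1(\cdot)$, which is a straight-down step along $e_l$, or by a move to some robot's $v_2(\cdot)$, which is a diagonal step coming from the neighbouring vertical line to the \emph{right} of $e_l$ (there is no line to its left, $e_l$ being the left edge). A straight-down step sends a robot to the next lower vertex of its own vertical line, so starting from any vertex of $e_l$ strictly above $\ell$ it lands at height $\ge \ell$; the only robot that could reach a vertex below $\ell$ this way is the one already at $\ell$, which never moves to its $v_1(\cdot)$ by Lemma~\ref{flemma6}. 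For the diagonal step I would invoke the move rule directly: an \texttt{extreme} robot is sent to $v_2(\cdot)$ only when $v_2(\cdot)$ is already occupied, so any diagonal arrival onto $e_l$ lands on an already-occupied vertex, which necessarily sits at height $\ge \ell$. Hence no vertex of $e_l$ below $\ell$ is ever populated, and the robot at $\ell$ itself does not descend (by Lemma~\ref{flemma3} it stays put if non-\texttt{extreme}, and by Lemma~\ref{flemma6} it cannot step to $v_1(\cdot)$ if \texttt{extreme}; its only alternative is a sideways $v_2(\cdot)$ step, which, if it empties $e_l$, changes the line's $x$-position and so falls outside the hypothesis). Therefore the lowest occupied layer of $e_l$ does not go down.

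Combining the two parts, neither endpoint of the depth interval moves in the increasing direction, so $d(e_l)$ does not increase as long as $e_l$ retains its $x$-position; the argument for $e_r$ is the mirror image. The main obstacle I anticipate is precisely the diagonal-arrival case: the whole argument rests on the algorithmic fact that a robot changes vertical line (moves to $v_2(\cdot)$) only toward an \emph{occupied} cell, which is what prevents a robot from the adjacent column slipping in below the current bottom of $e_l$. This must be extracted carefully from the case analysis of Algorithm~\ref{algo1} rather than assumed.
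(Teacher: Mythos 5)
Your proposal is correct and follows essentially the same route as the paper's own proof: the lowest robot of $e_l$ cannot descend (Lemmas~\ref{flemma3} and \ref{flemma6}), no robot ever moves upward so $H_t$ cannot rise, and a robot can change its vertical line only by a move to an already occupied $v_2(\cdot)$ position, which rules out any arrival on $e_l$ below its current lowest occupied vertex. Your write-up is merely a more carefully organized version of the same argument (splitting the depth into its two endpoints and checking each), so there is nothing substantive to add.
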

\begin{proof}
    We will prove this lemma for $e_l$ only. For $e_r$ the proof will be similar.
Observe that if $d(e_l)$ increase it can not be increased the by the lowest robot (say, $r$) on $e_l$ (by lemma \ref{flemma6}). So the lowest robot can not increase $d(e_l)$. Also, no robot moves above the layer it is on. So, no robot on $e_l$ moves up to increase $d(e_l)$.

Now it might be possible that $d(e_l)$ is increased by a robot that moves below the lowest robot $r$ of $e_l$ from the immediate right vertical line. Note that a robot moves from a vertical line to another vertical line only if it is \texttt{extreme}. Now, an \texttt{extreme} robot, if moves, never go to a position that is not occupied by any other robot before the movement according to algorithm \ref{algo1}(Fig \ref{FigMoveExtreme}). Since, to increase $d(e_l)$, a robot must move to a position that does not contain any other robot before the movement, no robot will come below $r$.

So, $d(e_l)$ never increases. Similarly we can say $d(e_r)$ never increases. Thus the result.\qed
\end{proof}
\begin{lemma}
\label{flemma5}
If $w(\mathcal{C}) > 0$ at a round $t_0$ then there exists a round $t>t_0$ such that $w(\mathcal{C})$ decreases.
\end{lemma}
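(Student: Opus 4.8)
The plan is to argue by contradiction, using the fact (Lemma~\ref{flemma2}) that the top layer $H_t$ is forced strictly downward every epoch, while the bottom of the left edge $e_l$ can never move down. First I would record a monotonicity observation. By the movement rules a robot on $e_l$ has its entire left open half empty, so if it is \texttt{extreme} it can only move to $v_1(r)$ (straight down, staying on $e_l$) or to $v_2(r)$ (one step to the right), and if it is not \texttt{extreme} it does not move at all (Lemma~\ref{flemma3}); moreover, a case check of the possible destinations $v_1(r),v_2(r)$ in Fig.~\ref{FigMoveExtreme} shows that no robot ever creates a vertical line strictly to the left of $e_l$. Symmetrically $e_r$ can only stay or move left. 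Hence $w(\mathcal{C})$ is non-increasing, and it suffices to rule out that $w(\mathcal{C})$ stays constant at its positive value for every round after $t_0$.

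So suppose $w(\mathcal{C})$ never decreases after $t_0$. Then neither does $e_l$ move right nor $e_r$ move left, i.e.\ $e_l$ is pinned to a fixed vertical line $L$ which keeps containing at least one robot at every round (if $L$ ever lost all its robots, $e_l$ would shift right and the width would already have dropped). Let $b_l(t)$ denote the height (the $y$-coordinate of its layer) of the lowest robot on $e_l$ at round $t$. I would then show that $b_l$ is non-decreasing: the lowest robot on $e_l$ never moves down by Lemma~\ref{flemma6}, no robot ever moves to a higher layer, and if the current lowest robot leaves $e_l$ (which can only happen through a move to $v_2(r)$, i.e.\ to the right) then the robot that remains lowest on $L$ sits on a strictly higher layer. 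Thus $b_l(t)\ge b_l(t_0)$ for all $t>t_0$.

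On the other hand, since the width stays constant at a positive value the robots are never gathered, so Lemma~\ref{flemma2} keeps firing and forces $H_t$ to drop to a strictly lower layer in each epoch; as layers live on a discrete set of heights, this is a drop by at least a fixed positive amount, so the height of $H_t$ tends to $-\infty$ as epochs elapse. But the lowest robot on $e_l$ is itself a robot of the configuration, so its layer never lies above $H_t$; that is, the height of $H_t$ is always at least $b_l(t)\ge b_l(t_0)$. These two facts contradict each other, so $w(\mathcal{C})$ cannot remain constant forever, which proves the lemma. (Equivalently one may phrase the contradiction through $d(e_l)$: the descent of $H_t$ together with the pinned bottom $b_l$ makes $d(e_l)$ strictly decrease each epoch, impossible since $d(e_l)\ge 0$; this dovetails with Lemma~\ref{flemma4}.)

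The main obstacle is the second step, justifying that $b_l$ cannot decrease. This is exactly where Lemmas~\ref{flemma3} and~\ref{flemma6} are needed, and it crucially uses the constant-width assumption to guarantee that $e_l$ never loses all of its robots. Care is also required for the preliminary monotonicity claim that no robot ever escapes to a vertical line left of $e_l$, so that $w(\mathcal{C})$ is genuinely non-increasing and a single eventual decrease is all we must exhibit.
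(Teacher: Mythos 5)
Your overall architecture is sound and is a genuine variant of the paper's argument. The paper also proceeds by contradiction from a pinned $e_l$ and $e_r$, but it finishes differently: using Lemma~\ref{flemma4} and Lemma~\ref{flemma2} it argues that $d(e_l)$ or $d(e_r)$ eventually hits $0$, and then shows that the single remaining robot on that edge is \texttt{extreme}, sees a robot only at its $v_2$ position (by connectivity, Lemma~\ref{flemma1}), and therefore moves onto the adjacent vertical line, shifting the edge and contradicting the pinned-width assumption. You instead pit the perpetual descent of $H_t$ (Lemma~\ref{flemma2}, which keeps firing because positive width means the gathering is never complete) against a fixed floor $b_l(t_0)$; since layers live on a discrete set of heights, that is a valid contradiction and it lets you skip the paper's final ``lone robot moves sideways'' case analysis entirely.

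However, there is a genuine gap in your key step, the claim that $b_l$ is non-decreasing. Your three justifications (Lemma~\ref{flemma6} for the lowest robot, no upward moves, and the lowest robot departing to the right) only govern robots that are already on $L$. They do not rule out a robot on the vertical line immediately to the right of $L$ moving \emph{onto} $L$ at a vertex strictly below the current lowest robot, which would decrease $b_l$. This is exactly the case the paper's proof of Lemma~\ref{flemma4} is built around, and it is excluded there by the observation that only \texttt{extreme} robots ever change vertical lines, and an \texttt{extreme} robot's destination ($v_1(r)$ or $v_2(r)$) is always a vertex that was already occupied before the move; hence no robot can land on the empty portion of $L$ below $b_l$. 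Note also that citing the \emph{statement} of Lemma~\ref{flemma4} does not repair this: $d(e_l)$ is measured relative to the descending top layer $H_t$, so ``$d(e_l)$ never increases'' is formally compatible with $b_l$ decreasing in a round in which $H_t$ also drops. To pin the bottom you must explicitly invoke (or reprove) the occupied-destination property of \texttt{extreme} moves; once that one sentence is added, your proof goes through.
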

\begin{proof}
    Let $w(\mathcal{C}) > 0$. Note that no robot from $e_l$ moves left and no robots from $e_r$ moves to its right. So $w(\mathcal{C})$ never increases. Now If possible let $w(\mathcal{C})$ never decreases. This implies the vertical lines $e_l$ and $e_r$ never shifts to right and left respectively.
Now by lemma \ref{flemma4} we can say that $d(e_l)$ and $d(e_r)$ never increases. Also by lemma \ref{flemma2} $H_t$ shifts down always in one epoch. So from these two lemmas we can conclude that there exists $t_1 > t_0$ such that at the round $t_1$ either $d(e_l)$ or, $d(e_r)$ becomes 0. Note that when $d(e_l)$ (or, $d(e_r)$) is 0 then $e_l$ (or, $e_r$) contains exactly one vertex $v$ having robots. Let  $r$ be a robot on that vertex $v$. Note that $r$ is \texttt{extreme} and by Lemma~\ref{flemma1} since $G_{\mathcal{C}}$ is connected $r$ sees  robots only on $v_2(r)$. So, $r$ moves to $v_2(r)$ which is on the next vertical line on its right (or, left). So after a finite epoch either $e_l$ shifts right or $e_r$ shifts left and thus we arrive at a contradiction. Hence the lemma.\qed
\end{proof}
\begin{theorem}\label{thm2}
Algorithm \texttt{1-hop 1-axis Gather} guarantees that there exists a round $t>0$ such that a swarm of $n$  myopic robots on an infinite triangular grid $\mathcal{G}$ with 1-hop visibility and one axis agreement will always gather after completion of round $t$  under semi-synchronous scheduler starting from any initial configuration for which visibility graph $G_{\mathcal{C}}$ is connected.
\end{theorem}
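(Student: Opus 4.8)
The plan is to assemble Lemmas~\ref{flemma1}--\ref{flemma5} into a two-phase termination argument organized around a lexicographically ordered progress measure $\Phi(\mathcal{C}) = (w(\mathcal{C}),\, d(e_l)+d(e_r))$, kept honest by connectivity as a standing invariant. First I would invoke Lemma~\ref{flemma1} so that $G_{\mathcal{C}}$ stays connected at every round; this is the decisive invariant, since it forbids the swarm from splitting into two or more clusters that drift apart, so that the only possible terminal configuration is a single occupied vertex rather than several separate gatherings. Both coordinates of $\Phi$ will be shown to range over discrete sets bounded below by $0$ and to decrease over finitely many epochs.

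For the first phase I would apply Lemma~\ref{flemma5}: as long as $w(\mathcal{C})>0$ the width strictly decreases after finitely many rounds and (as noted in that proof) never increases. Since $w(\mathcal{C})$ lives in a discrete set bounded below, after finitely many epochs the configuration attains its least positive width, at which point only the two adjacent vertical lines $e_l$ and $e_r$ carry robots; by the maintained connectivity they form a single zig-zag chain spanning these two lines.

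For the second phase, with the swarm confined to $e_l\cup e_r$, I would track $d(e_l)$ and $d(e_r)$. By Lemma~\ref{flemma2} the top layer $H_t$ descends every epoch; by Lemma~\ref{flemma6} the lowest robot on each edge never steps down; and by Lemma~\ref{flemma4} neither depth ever increases. Hence, as $H_t$ falls while the lowest robots stay fixed, both depths strictly decrease each epoch, and being non-negative and discrete they bottom out in finitely many epochs, shrinking the chain to an $O(1)$-size neighbourhood of a single vertex. A final descent of $H_t$ then places every robot on that vertex, which by connectivity must hold all $n$ robots, yielding the round $t$ claimed in Theorem~\ref{thm2}.

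The hard part will be the terminal collapse of the second phase: I must verify that once the depths reach their minimum, the last application of Lemma~\ref{flemma2} actually \emph{zips} the two lines onto one vertex rather than regenerating a fresh width-one strip, and that no robot is stranded. I would dispatch this by a finite case analysis of all connected configurations on $e_l\cup e_r$ of minimal depth, reading off each robot's destination directly from the move rules of Algorithm~\ref{algo1} and using Lemma~\ref{flemma1} to confirm that every such move keeps the chain intact and funnels all robots to a common grid point.
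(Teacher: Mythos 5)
Your overall skeleton --- connectivity as a standing invariant (Lemma~\ref{flemma1}) plus a lexicographic measure built from the width and the depths, driven down via Lemmas~\ref{flemma2}, \ref{flemma4}, \ref{flemma5} --- is the same skeleton the paper uses, but your endgame has a genuine gap. First, Lemma~\ref{flemma5} applies whenever $w(\mathcal{C})>0$, so the width does not stabilize at its ``least positive'' value: repeated application drives it all the way to $0$, and the paper's terminal case is accordingly a \emph{single} vertical line, not a two-line strip. Second, your picture of the two-line phase is incorrect on the decisive point: the two depths do not bottom out together, leaving an $O(1)$-size cluster. One depth, say $d(e_l)$, can reach $0$ while the other is still $\Theta(n)$ --- indeed $d(e_l)$ can already be $0$ at the moment the strip forms (one robot on $e_l$ at $H_t$, a long chain on $e_r$ hanging below it). At that point the lone occupied vertex of $e_l$ hosts an \texttt{extreme} robot that sees a robot only at its $v_2$ position and hops onto $e_r$; what remains is a single vertical line of possibly $\Theta(n)$ vertical extent, nothing like a neighbourhood of one vertex. (Moreover, the two-adjacent-line stage need not occur at all: both edges can empty into a middle line in the same round.) Hence the step ``a final descent of $H_t$ then places every robot on that vertex'' does not follow, and the finite case analysis you propose --- connected two-line configurations with both depths minimal --- enumerates configurations that are not the ones that actually arise.

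What is missing is exactly the single-line argument that constitutes the bulk of the paper's proof, and it requires the move rules, not just the lemmas: once $w(\mathcal{C})=0$, one shows that no robot ever leaves the line, because every robot other than the topmost has a robot on its positive $y$-axis (so it is non-\texttt{extreme}, and since both of its $v_2$ positions are empty it never moves), while the topmost robot is \texttt{extreme} but sees a robot only at $v_1$, so its only permitted move is straight down along the line. With this line-invariance in hand, Lemmas~\ref{flemma2} and \ref{flemma4} force $d(e_l)$ down to $0$, which is precisely the gathered configuration. Your lexicographic measure $\bigl(w(\mathcal{C}),\, d(e_l)+d(e_r)\bigr)$ can indeed be made to carry the whole proof, but only after this single-line analysis replaces your ``zip'' step; as written, the terminal collapse is asserted rather than proved.
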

\begin{proof}
    From lemma \ref{flemma5} we can conclude that after a finite number of rounds $w(\mathcal{C})$ becomes zero. Observe that when $w(\mathcal{C)} =0$ all the robots are on a vertical line $e_l$. Also, note that when all the robots are on a single line then $e_l$ is the same as $e_r$. In this situation if $d(e_l) = 0$ that means gathering is complete. So let us assume $d(e_l) > 0$. Note that in this scenario, since no non \texttt{extreme} robot $r$ sees two robots on both of its $v_2(r)$ position and no \texttt{extreme} robot $r'$  sees a robot in a location other than $v_1(r')$, no robot will move to a different vertical line from $e_l =e_r$. Now by lemma\ref{flemma2} and lemma \ref{flemma4} , $H_t$ shifts down until there is only one grid point having robots on $e_l$ (i.e $d(e_l) = 0$).  So we can conclude that there exists a round  $t > 0$ such that gathering is complete after the completion of round $t$.\qed
\end{proof}
\subsection{Complexity Analysis}

First we observe in Theorem~\ref{thm3} that it will take at least $\Omega(n)$ epochs to gather $n$ number of robots. The theorem is stated and proved formally in the following.
\begin{theorem}
\label{thm3}
 Any gathering algorithm on a triangular grid takes $\Omega(n)$ epoch.
\end{theorem}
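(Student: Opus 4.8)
The plan is to prove the bound by a movement-and-distance argument that is completely independent of the particular algorithm. First I would exhibit a single connected initial configuration whose ``spread'' is linear in $n$: place the $n$ robots on $n$ consecutive grid points of one grid line, say a horizontal segment occupying the vertices at grid distance $0, 1, 2, \dots, n-1$ from the left end. This configuration is connected (consecutive robots are $1$-hop apart, so $G_{\mathcal{C}}$ is a path), hence it is a legitimate input for the gathering problem, and its graph diameter is exactly $n-1$.

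Next I would isolate the key kinematic constraint imposed by the model: in the \textsc{compute} step every robot chooses a destination adjacent to (or equal to) its current vertex, so a single activation changes a robot's position by at most one hop; equivalently, the grid (graph) distance between a robot and any fixed target vertex can decrease by at most $1$ per activation. To convert ``activation'' into ``epoch'' cleanly, I would fix the adversarial schedule to be fully synchronous. This is legitimate because the SSYNC adversary subsumes FSYNC, and any algorithm claiming to gather under SSYNC must in particular gather under this FSYNC schedule. Under FSYNC one epoch is exactly one round and each robot is activated exactly once per epoch, so each robot moves at most one hop per epoch.

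Then I would finish with a triangle-inequality estimate. Let $g$ be the (a priori unknown) vertex at which the swarm eventually gathers, and let $r_\ell, r_r$ be the two endpoint robots, whose initial grid distance is $n-1$. Since graph distance on $\mathcal{G}$ is a metric, $d(r_\ell, g) + d(g, r_r) \ge d(r_\ell, r_r) = n-1$, so at least one endpoint, say $r_\ell$, starts at distance $d(r_\ell, g) \ge (n-1)/2$ from $g$. Because $r_\ell$ shortens this distance by at most $1$ per epoch, it cannot reach $g$ in fewer than $(n-1)/2$ epochs; gathering therefore requires at least $(n-1)/2 = \Omega(n)$ epochs on this input, proving the claim.

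I expect the main obstacle to be the mismatch between ``rounds'' and ``epochs'': under a general SSYNC schedule a robot may be activated several times within one epoch, so the per-epoch displacement is not obviously bounded by a single hop. The clean way around this is precisely the observation that a lower bound need only exhibit one bad schedule, and the fully synchronous schedule (in which epoch, round, and single activation coincide) makes the one-hop-per-epoch bound exact. A secondary point to state carefully is that the argument is genuinely algorithm-independent: it uses only connectivity of the input, the adjacency restriction on a single move, and the metric (triangle) inequality, so it applies uniformly to every gathering algorithm, including \textsc{1-hop 1-axis gather}.
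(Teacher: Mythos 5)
Your proposal is correct and takes essentially the same route as the paper: both exhibit an initial configuration of $n$ robots on consecutive grid points of a line and use the fact that, under a worst-case schedule in which each robot is activated only once per epoch, a robot moves at most one hop per epoch, so a configuration of linear extent cannot collapse in fewer than roughly $n/2$ epochs. The only cosmetic difference is bookkeeping --- the paper tracks the height of the configuration (shrinking by at most $2$ units per epoch), while you track one endpoint's distance to the eventual gathering vertex via the triangle inequality --- and your explicit handling of the round-versus-epoch issue by fixing the fully synchronous schedule is, if anything, a bit more careful than the paper's phrasing.
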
 
\begin{proof}
    Let us consider the configuration in Fig. \ref{Figlowerbound}. Let us define the height of the configuration as the distance between the topmost and lowest layer of the configuration. Note that in Fig.~\ref{Figlowerbound} the height of the configuration is $n$ (i.e., the number of robots). When considering the worst case, a robot can only be activated once in each epoch. Hence, the height of the configuration decreases by at most 2 units in each epoch.  Now the robots will gather when the height of the configuration and width of the configuration both becomes 0. Since in each epoch, height decreases by 2 units, at least $\frac{n}{2}$ rounds will be needed to gather the $n$ robots. Hence the result.\qed
\end{proof}

\begin{figure}[ht]
    \centering
     \includegraphics[height=5cm,width=3cm]{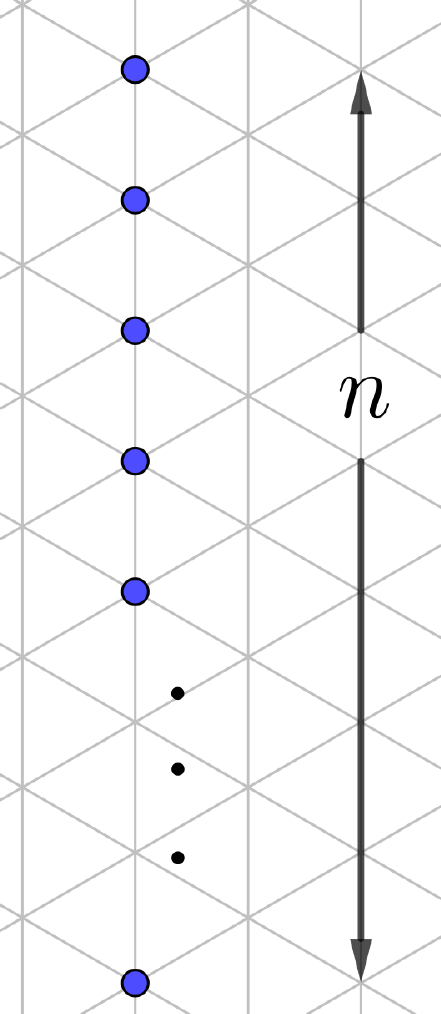}
     \caption{All of the $n$ robots are on a straight line on the triangular grid. Height of the configuration is $n$.}\label{Figlowerbound}
    \end{figure}
    
Now we shall prove that the robots executing our proposed algorithm do not go downwards by much. First, we define the smallest enclosing rectangle for the initial configuration.
\begin{figure}[ht]
    \centering
     \includegraphics[width=0.4\linewidth]{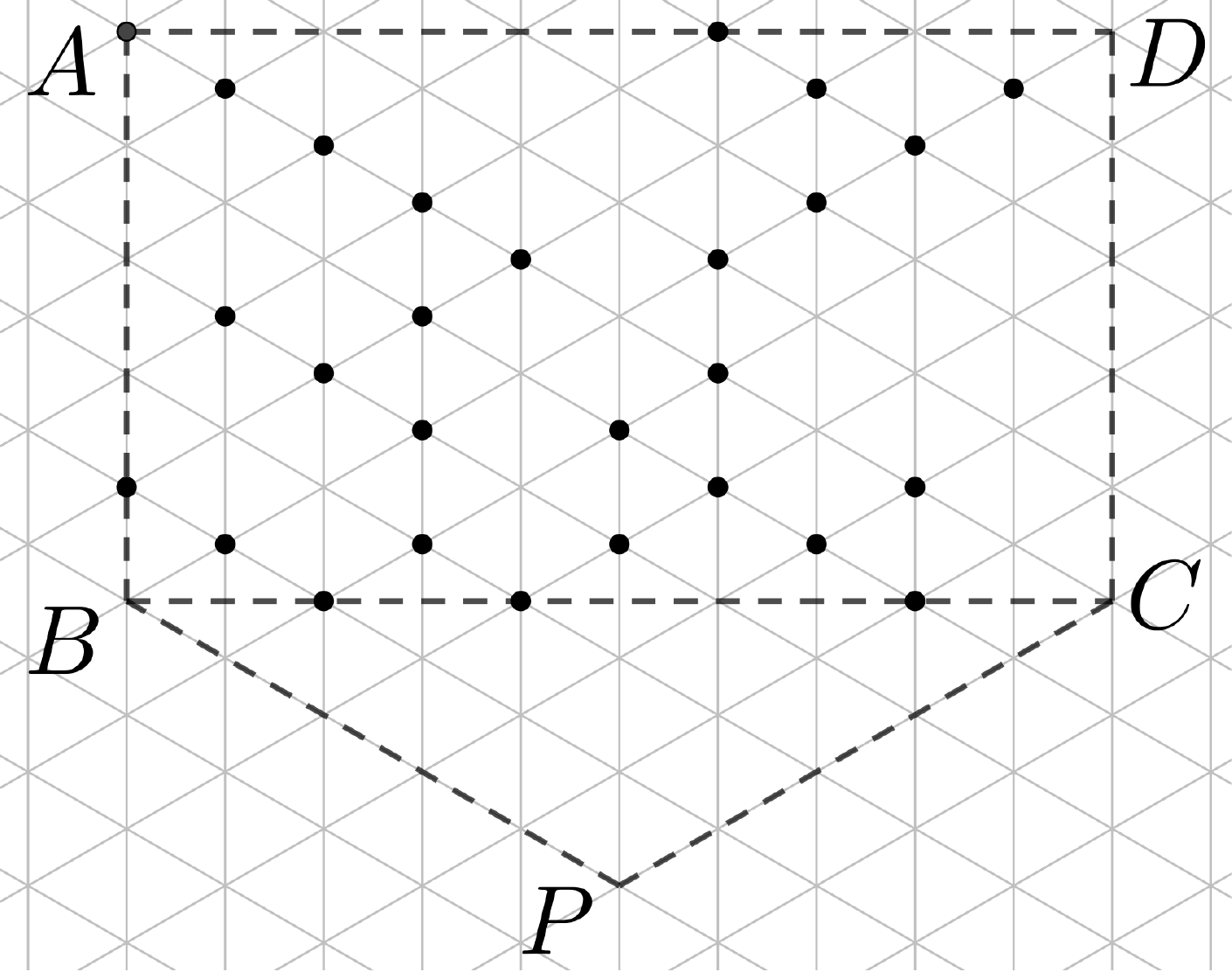}
     \caption{$ABCD$, smallest enclosing rectangle}
     \label{correct1}
    \end{figure}

\begin{definition}[$\mathcal{SER}$]
A rectangle $\mathcal{R}=ABCD$ is said to be the smallest enclosing rectangle ($\mathcal{SER}$) (Figure~\ref{correct1}) of the initial configuration if it is the smallest in dimension satisfying the following:
\begin{enumerate}
    \item All robots in the initial configuration are inside $\mathcal{R}$
    \item All vertices of $ABCD$ are on some grid points
    \item $AB$ and $CD$ side is parallel to the axis agreed by all the robots
    \item $BC$ is the lower side of the rectangle. 
\end{enumerate}
\end{definition}
Next, we define a polygon that shall contain all the robots throughout the algorithm.
\begin{definition}[Bounding Polygon]
Let $\mathcal{R}=ABCD$ be the $\mathcal{SER}$ of the initial configuration. Let $P$ the point below $BC$ line such that $\angle CBP=\angle BCP=\pi/6$. Then the polygon $\mathcal{P}=ABPCDA$ is said to be the Bounding Polygon. 
\end{definition}
  We show that no robot executing Algorithm~\ref{algo1} ever steps out of the bounding polygon (Lemma~\ref{flemma7}). Using Lemma~\ref{flemma7}, Theorem~\ref{thmf} proves that Algorithm~\ref{algo1} terminates within $O(n)$ epochs.
 \begin{lemma}
 \label{flemma7}
 No robot executing the Algorithm~\ref{algo1} ever steps out of the bounding polygon. 
 \end{lemma}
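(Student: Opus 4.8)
The plan is to prove a stronger statement by induction on the rounds: the invariant that \emph{every} robot lies inside the bounding polygon $\mathcal{P}$ is preserved by one round of Algorithm~\ref{algo1}. Since the initial configuration sits inside $\mathcal{R}=ABCD\subseteq\mathcal{P}$, establishing this inductive step proves the lemma. The polygon $\mathcal{P}=ABPCDA$ is convex, so ``lying inside $\mathcal{P}$'' is the conjunction of five half-plane conditions, one per edge: below the top side $AD$; to the right of the left side $AB$ and to the left of the right side $CD$; and above each of the two slanted sides $BP$ and $CP$. I would verify that each of these five conditions is individually preserved in every round.

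The geometric observation to set up first is that, because the agreed $y$-axis is one of the three grid directions, the two slanted sides are \emph{parallel to grid lines}: the choice $\angle CBP=\angle BCP=\pi/6$ makes $BP$ parallel to the lower-right edge direction and $CP$ parallel to the lower-left one. It is then convenient to read each admissible move in oblique coordinates aligned to the grid. Every destination the algorithm allows ($v_1(r)$ straight down, or $v_2(r)$ diagonally down toward the occupied side) strictly lowers the $y$-coordinate, and each step changes the signed distances to $BP$ and to $CP$ in a controlled way: a straight-down step decreases both distances, a down-right $v_2$ step decreases only the distance to $CP$, and a down-left $v_2$ step decreases only the distance to $BP$.

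With this in hand the five checks reduce to one recurring argument. For the top side the claim is immediate, since robots never move up. For each of the other four sides, consider a robot $r$ lying \emph{on} the corresponding boundary line at the start of a round. By the induction hypothesis no robot lies strictly beyond that line, so in $r$'s snapshot all neighbours of $r$ on the far side of the line are empty. Running through the branches of Algorithm~\ref{algo1}, for the left side $AB$ and the line $BP$ the emptiness of $r$'s left (and straight-down) neighbours means $r$ is not ``left-occupied'' and sees no robot on $v_1(r)$, so every branch that would carry $r$ across the line is disabled: $r$ either terminates, does not move, moves along the boundary, or moves inward. The case of $r$ being non-\texttt{extreme} is excluded because that move to $v_1(r)$ would require both $v_2(r)$ positions to be occupied, one of which lies across the line. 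The right side $CD$ and the line $CP$ are handled symmetrically. Hence no robot ever steps strictly past a boundary edge and the invariant survives the round.

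The main obstacle is making the two slanted-side checks airtight as a \emph{simultaneous} induction: the statement ``a robot on $BP$ does not cross $BP$'' is justified using the very invariant ``no robot lies below $BP$'', so one must be careful that each robot's decision is taken from the start-of-round snapshot (where the hypothesis applies) and that the concurrent moves of the other robots cannot push anyone below the line. The latter holds because only a robot already on a boundary line can lower its distance to that line past zero, and that move has just been ruled out. A minor point to nail down is the exact correspondence between the labels $v_1,v_2,v_3$ and the grid directions on each of the four non-trivial sides, together with the fact that a robot changes its vertical line only when it is \texttt{extreme} and moves to $v_2$; this is precisely what forces any line-crossing move to be one that the boundary emptiness has disabled.
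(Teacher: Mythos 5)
Your proposal is correct and follows essentially the same route as the paper's proof: the paper argues via the earliest round in which some robot leaves $\mathcal{P}$ (the contrapositive form of your round-by-round induction), handles the top side by the no-upward-moves observation, and disables the crossing moves on the remaining four sides using exactly your two key facts — an \texttt{extreme} robot changes its vertical line only by moving to an occupied $v_2$ position, and a non-\texttt{extreme} robot moves to $v_1$ only when both $v_2$ positions are occupied, one of which lies outside $\mathcal{P}$ and is therefore empty. Your explicit remarks that the slanted sides are parallel to grid directions and that each move changes the signed distance to a side by at most one grid-row (so a robot must stand on a side before it can cross it) make precise a step the paper treats only implicitly.
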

\begin{proof}
    Let $\mathcal{P}=ABPCDA$ be the bounding polygon (Fig. \ref{correct1}). Note that, the point $P$ is on some grid point. Opposite to our claim, let there be some robots that step out of $\mathcal{P}$ and $k^{th}$ round is the earliest round when some robots stepped out $\mathcal{P}$. Let $r$ be such a robot. Note that at the end of $(k-1)^{th}$ round, no robot is outside of $\mathcal{P}$. Firstly since no robot ever moves upward, so no robot can step out of $\mathcal{P}$ through the $AD$ side and goes above to the $AD$ side. Then $r$ can step out of $\mathcal{P}$ through any side of $\mathcal{P}$ but $AD$. Note that, in this case, $r$ must land on that side in some round before. Hence at the start of $k^{th}$ round $r$ must be on that side.  

Let $r$ has stepped out of $\mathcal{P}$ through $AB$ side. If $r$ is on any point but $B$ then to cross the $AB$ line and step out of $\mathcal{P}$ it has to change its vertical line. This is only allowed for an \texttt{extreme} robot according to our algorithm. So if $r$ has to cross the vertical line while stepping out of $\mathcal{P}$ then $r$ must be an \texttt{extreme} robot at $k^{th}$ round. But according to our algorithm, since an \texttt{extreme} robot never occupies an empty grid point, this yields a contradiction. Even if $r$ is at $B$ and steps out of $\mathcal{P}$ without changing its vertical line then it must go down as a non \texttt{extreme} robot. But since the left $v_2(r)$ position of $r$ is empty, so according to our algorithm, $r$ wouldn't go down as a non \texttt{extreme} robot. Hence $r$ can not step out of $\mathcal{P}$ through the $AB$ line. With a similar argument, one can similarly show that $r$ can not step out of $\mathcal{P}$ through the $CD$ line.
 
 Now let the robot $r$ step out of $\mathcal{P}$ through the $BP$ line or $PC$ line. We already showed that $r$ can not be at $B$ or $C$ at the start of $k^{th}$ round. Let $r$ be at some point on $PB$ line or $PC$ line. Now using the same argument as the previous case one can show that $r$ can not be an \texttt{extreme} robot at the start of $k^{th}$ round because $r$ occupies an empty grid point in this round. Now we see $r$ also cannot be a non \texttt{extreme} robot at the start of $k^{th}$ round. Because a non \texttt{extreme} robot only moves when both of its $v_2(r)$ position is nonempty. But this can not be true in this case for robot $r$.
 Hence this shows $r$ cannot step out of $\mathcal{P}$ at all, which contradicts our assumption that some robots have stepped out of $\mathcal{P}$.\qed
\end{proof}
\begin{theorem}
\label{thmf}
The algorithm 1-hop 1-axis Gather takes at most $O(n)$ epochs to gather all the robots.  
\end{theorem}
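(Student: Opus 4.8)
The plan is to combine the confinement supplied by Lemma~\ref{flemma7} with the monotone progress supplied by Lemma~\ref{flemma2}, and to convert this into an epoch count by measuring the vertical extent of the bounding polygon $\mathcal{P}$ in units of the layer spacing. The key observation is that the topmost layer $H_t$ is a monotonically descending quantity: no robot ever moves upward, and by Lemma~\ref{flemma2} the topmost layer drops to a strictly lower layer in every epoch for which gathering is not yet complete. Hence, if I can show that $H_t$ can descend through only $O(n)$ distinct layers before reaching the bottom of $\mathcal{P}$, the theorem follows, since the alternative of $H_t$ descending forever is impossible once it reaches the lowest layer of $\mathcal{P}$.

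First I would bound the vertical span of the bounding polygon. Let $\mathcal{R}=ABCD$ be the $\mathcal{SER}$ of the initial configuration, with width $w$ and height $h$. Because the initial visibility graph $G_{\mathcal{C}}$ is connected and carries $n$ robots, any spanning tree of $G_{\mathcal{C}}$ has $n-1$ unit edges, so both $w$ and $h$ are $O(n)$. The bounding polygon adds, below the side $BC$, the isosceles triangle $BPC$ whose base is $BC=w$ and whose base angles are $\pi/6$; its depth below $BC$ is therefore $\tfrac{w}{2}\tan(\pi/6)=\tfrac{w}{2\sqrt3}=O(n)$. Consequently the total vertical extent of $\mathcal{P}$, from the side $AD$ down to the apex $P$, equals $h+\tfrac{w}{2\sqrt3}=O(n)$, and since consecutive layers are a fixed distance apart, the number of layers meeting $\mathcal{P}$ is $O(n)$.

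Finally I would assemble the argument. By Lemma~\ref{flemma7} every robot stays inside $\mathcal{P}$ throughout the execution, so in particular $H_t$ never lies below the layer through $P$; combined with the fact that $H_t$ only ever descends, this confines $H_t$ to the $O(n)$ layers meeting $\mathcal{P}$. By Lemma~\ref{flemma2}, as long as gathering is incomplete $H_t$ must move to a strictly lower layer in each epoch, and a strictly decreasing sequence of layers drawn from a set of size $O(n)$ can have length at most $O(n)$. Therefore gathering must be complete within $O(n)$ epochs, which is the claim; that gathering does eventually occur, rather than $H_t$ merely bottoming out, is guaranteed by Theorem~\ref{thm2}.

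I expect the main obstacle to be the bookkeeping that links these three ingredients cleanly. Lemma~\ref{flemma2} asserts progress only while gathering is incomplete, so the descent bound must be coupled with the termination guarantee of Theorem~\ref{thm2} and the floor supplied by Lemma~\ref{flemma7} in order to rule out an infinite descent. Moreover, the conversion from the Euclidean height of $\mathcal{P}$ to a count of layers, and thence to a count of epochs, must correctly account for the constant layer spacing, so that the final bound is genuinely linear in $n$ and does not secretly absorb a dependence on the grid geometry.
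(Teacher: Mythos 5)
Your proposal is correct and follows essentially the same route as the paper's proof: confine the robots to the bounding polygon via Lemma~\ref{flemma7}, bound its vertical extent by $O(n)$ using connectivity of the initial configuration, and invoke the per-epoch descent of $H_t$ from Lemma~\ref{flemma2} to conclude an $O(n)$ epoch bound. The paper merely carries out the same argument with explicit constants (obtaining $\tfrac{5}{2}(n+1)$ epochs) rather than asymptotic bookkeeping.
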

\begin{proof}
    Let $\mathcal{R}=ABCD$ be the $\mathcal{SER}$ of the initial configuration and $\mathcal{P}=ABPCDA$ be the bounding polygon. Since the total number of robots is $n$, from simple geometry $|BC|\le(n+1)\dfrac{\sqrt{3}}2$ and so $|BC|\le n+1$. Also the maximum distance from $AD$ to $P$ is $\dfrac 5 4(n+1)$. Therefore the maximum distance of $H_t$ of the initial configuration from $P$ is $\dfrac 5 4(n+1)$.

Now from Lemma~\ref{flemma2} we can say that $H_t$ shifts down at least half a unit in one epoch till the gathering is not done. And since the maximum distance of $H_t$ of the initial configuration from $P$ is $\dfrac 5 4(n+1)$, so within $2\times\dfrac 5 4(n+1)=\dfrac 5 2(n+1)$ epochs the gathering must be complete. Hence the result follows.\qed
\end{proof}
\section{Conclusion}
\textit{Gathering} is a classical problem in the field of swarm robotics. The literature on the gathering problem is vast as it can be considered under many different robot models, scheduler models, and environments. Limited vision is very practical when it comes to robot models. To practically implement any algorithm considering a robot swarm having full visibility is impossible. So, we have to transfer the research interest towards providing algorithms that work under limited visibility also. This paper is one achievement towards that goal.

In this paper, we have done a characterization of gathering on an infinite triangular grid by showing that it would not be possible to gather from any initial configuration to a point on the grid if the myopic robots having a vision of 1-hop do not have any axis agreement even under the FSYNC scheduler. Thus, considering one axis agreement we have provided an algorithm that gathers $n$ myopic robots with a vision of 1-hop under the SSYNC scheduler within $O(n)$ epochs. We have also shown that the lower bound of time for gathering $n$ robots on an infinite triangular grid is $\Omega(n)$. So our algorithm is time optimal.

For an immediate course of future research, one can think of solving the gathering problem by considering myopic robots on an infinite triangular grid making the algorithm collision-free (where no collision occurs except at the vertex of gathering) and under an asynchronous scheduler. Another interesting work would be to find out if there is any class of configurations for which gathering on a triangular grid will be solvable even without one axis agreement.
\vspace{2mm}

\textbf{Acknowledgement:} First and second Authors are supported by UGC, the Government of India. The third author is supported by the West Bengal State government Fellowship Scheme.
%
%
%
 \bibliographystyle{splncs04}
 \bibliography{samplepaper}
\end{document}